\newcommand{\Ss}{\mathcal{S}}
\newcommand{\Sc}{\mathcal{S}^{\rm c}}
\newcommand{\Bs}{\mathcal{B}}
\newcommand{\Us}{\mathcal{U}}
\newcommand{\xs}{\{x\}}
\newcommand{\ys}{\{y\}}
\newcommand{\FP}{{\rm FP}}
\newcommand{\bA}{\textbf{A}}
\newcommand{\ba}{\textbf{a}}
\newcommand{\bb}{\textbf{b}}
\newcommand{\bC}{\textbf{C}}
\newcommand{\be}{\textbf{e}}
\newcommand{\bF}{\textbf{F}}
\newcommand{\bI}{\textbf{I}}
\newcommand{\bJ}{\textbf{J}}
\newcommand{\bs}{\textbf{s}}
\newcommand{\bU}{\textbf{U}}
\newcommand{\bV}{\textbf{V}}
\newcommand{\bW}{\textbf{W}}
\newcommand{\bw}{\textbf{w}}
\newcommand{\by}{\textbf{y}}
\newcommand{\bz}{\textbf{z}}
\newcommand{\tr}{\mbox{\boldmath tr}\, }
\newcommand{\diag}{\mbox{\boldmath diag}\, }
\renewcommand{\det}{\mbox{\boldmath det}\, }
\renewcommand{\log}{\mbox{\boldmath log}\, }
\newcommand{\vect}{\mbox{\boldmath vec}\, }
\newcommand{\btheta}{\mbox{\boldmath{$\theta$}}}
\newcommand{\bfi}{\mbox{\boldmath{$\phi$}}}
\newtheorem{theorem}{Theorem}[section]
\theoremstyle{remark}
\begin{document}
\label{title}
\title{Sparse Antenna and Pulse Placement for Colocated MIMO Radar}
\author{Ehsan~Tohidi\IEEEauthorrefmark{1}, Mario~Coutino\IEEEauthorrefmark{2}, Sundeep~Prabhakar~Chepuri\IEEEauthorrefmark{2}, Hamid~Behroozi\IEEEauthorrefmark{1},\\ Mohammad~Mahdi~Nayebi\IEEEauthorrefmark{1}, and Geert~Leus\IEEEauthorrefmark{2}
	
\thanks{\IEEEauthorrefmark{1}Department of Electrical Engineering, Sharif University of Technology, Tehran, Iran.
	E-mails: Tohidi@ee.sharif.edu,
	Behroozi@sharif.edu,
	Nayebi@sharif.edu}
\thanks{\IEEEauthorrefmark{2}Faculty of Electrical Engineering, Mathematics and Computer Science, Delft University of Technology, Delft, The Netherlands.
E-mails: M.A.CoutinoMinguez-1@tudelft.nl,
S.P.Chepuri@tudelft.nl,
G.J.T.Leus@tudelft.nl 
}
\thanks{This manuscript is an extension of the work presented in \cite{TohidiAsilomar2017}. This work is supported in part by the KAUST-MIT-TUD consortium under grant OSR-2015-Sensors-2700 and the ASPIRE project (project 14926 within the STW OTP programme), which is financed by the Netherlands Organisation for Scientific Research (NWO). Mario Coutino is partially supported by CONACYT.}
}

\markboth{Tohidi et al. Sparse Antenna and Pulse Placement for Colocated MIMO Radar}%
{Shell \MakeLowercase{\textit{et al.}}: Bare Demo of IEEEtran.cls for IEEE Communications Society Journals}

\maketitle

\begin{abstract}
\boldmath 
Multiple input multiple output (MIMO) radar is known for its superiority over conventional radar due to its antenna and waveform diversity. Although higher angular resolution, improved parameter identifiability, and better target detection are achieved, the hardware costs (due to multiple transmitters and multiple receivers) and high energy consumption (multiple pulses) limit the usage of MIMO radars in large scale networks.  
On one hand, higher angle and velocity estimation accuracy is required, but on the other hand, a lower number of antennas/pulses is desirable. To achieve such a compromise, in this work, the Cram\'er-Rao lower bound (CRLB) for the angle and velocity estimator is employed as a performance metric to design the antenna and pulse placement. It is shown that the CRLB derived for two targets is a more appropriate criterion in comparison with the single-target CRLB since the two-target CRLB takes into account both the mainlobe width and sidelobe level of the ambiguity function. In this paper, several algorithms for antenna and pulse selection based on convex and submodular optimization are proposed. Numerical experiments are provided to illustrate the developed theory.
\end{abstract}
\begin{IEEEkeywords}
Angle and velocity estimation, antenna placement, MIMO radar, submodularity,  two-target CRLB, pulse placement.
\end{IEEEkeywords}

\section {Introduction}
Multiple input multiple output (MIMO) radar has been gaining a lot of interest during the last decade \cite{5393291}. The main reason behind this growth is the enormous capabilities that this type of radar provides, e.g., higher angular resolution, improved parameter identifiability, and radar cross section (RCS) diversity \cite{4350230,4408448}. Based on the antenna configuration MIMO radars are categorized, into {\it colocated} and {\it widely-separated} MIMO radars. Colocated MIMO radars have closely located antennas, which see the targets from the same angle. A high angular resolution due to waveform diversity is one of the main advantages of colocated MIMO radars \cite{7126203,6957532,6650099,6132420,5752847,5672411,6678708}. The other category, widely-separated MIMO radars, have transmitter/receiver antennas placed far from each other. This results in different target angles of view for different transmitter-receiver pairs. Low speed moving target detection due to the spatial diversity gain is among the advantages widely-separated MIMO radars (see \cite{5393291} and \cite{7771508,6621849,5989873}). In this paper, the focus is on the colocated MIMO radar configuration to estimate the angle and velocity of the targets (the developed design algorithms can be easily adapted to widely-separated MIMO radars as discussed later).

Angle of arrival and velocity estimation are the main tasks of any radar system~\cite{ender2010compressive}. Due to the additional degrees of freedom, MIMO radars perform these tasks much better than a single radar \cite{godrich2010target,5393291,5672411}. In \cite{he2016generalized} and \cite{ai2015cramer} the Cram\'er-Rao lower bound (CRLB) for a MIMO radar has been derived to prove this advantage. Beside the numerous advantages of MIMO radars over conventional radars, the main drawbacks of these radars are, however, the large hardware costs due to multiple transmitter and receiver chains, the high energy consumption due to multiple transmitted pulses, and the large computational complexity involved in processing the transmitted pulses. To reduce these costs, keeping in mind the low number of targets in the region of interest, compressive sensing (CS) based approaches have shown promising performance \cite{ender2010compressive,7330290}. Although CS-based approaches reduce the number of measurements to be processed, the hardware costs are not reduced. This is because of the dense sampling matrices used in CS that limit the number of measurements while requiring all the antennas and pulses. Alternatively, antenna and pulse selection (i.e., employing only a subset of all the antennas and pulses) via sparse sensing can be performed to reduce the hardware sensing costs as well as the energy consumption, while achieving the desired performance. We would like to stress here that pulse placement for radar has been rarely considered before. A closely related topic is waveform design \cite{4350230} which deals with the design and selection of transmit waveforms with proper characteristics. However, such designs are mainly concerned with statistical properties of the signal within each pulse rather than in the selection of the positions of the transmit pulses within the pulse sequence.

Sensor selection is the problem of choosing a subset of sensors out of a set of candidate sensors. Sensor selection is important to reduce the hardware costs, computational complexity, network energy consumption, and has been studied vastly as detailed next. A knapsack problem formulation for sensor selection is proposed in \cite{godrich2012sensor}, where an algorithm based on a greedy heuristic is presented. Sensor selection via convex optimization is proposed in \cite{joshi2009sensor}, where the problem is first relaxed to a convex program, and then, sensors are selected through solving a convex optimization problem. Similarly, \cite{roy2013sparsity} proposed a sparsity-enforcing sensor selection scheme for direction of arrival estimation, where a single-target CRLB is used as objective function with additional constraints on the sidelobe level. In \cite{chepuri2015sparsity} and \cite{rao2015greedy}, sensor selection for general non-linear models through convex and submodular optimization, respectively, is proposed.
 
Antenna and pulse selection can be posed as a sensor selection problem where a subset of antennas and pulses is selected out of a large number of antennas and pulses. We refer to this problem as antenna and pulse placement. Antenna placement in widely separated MIMO radar for joint target position and velocity estimation is studied in \cite{5393291,greco2011cramer,ivashko2017radar}, which are all based on the single-target CRLB. For instance, \cite{rossi2014spatial} proposed a DOA estimation framework for a MIMO radar in which transmit and receive antenna positions are drawn at random from a uniform distribution. In a similar way, by employing the single-target CRLB in \cite{4487196}, joint array and waveform optimization techniques for MIMO radar are investigated. The authors in \cite{4487196} show that both local and global errors incurred by the estimator must be considered during the design phase. In fact, these effects occur in low SNR scenarios when the estimator exhibits a threshold effect due to local maxima in the ambiguity function. However, interactions between two or more targets is not considered in \cite{4487196}. This is achieved in this work by considering the general expression of the two-target CRLB.

Typically, radars transmit several pulses with a uniform time separation, which is called the pulse repetition interval (PRI). By exploiting the phase differences of the reflected pulses from the targets, Doppler (or velocity) estimation is performed \cite{skolnik2008radar}. A velocity estimation algorithm for wide-band frequency-modulated continuous-wave radar systems using the phase differences of consecutive uniformly separated pulses is proposed in \cite{wagner2013wide}. To reduce the network energy consumption and processing costs, we aim to have an irregular pulse transmission pattern (i.e., by transmitting only a subset of the uniformly separated pulses). Figure \ref{pulse_selection_sample} shows an example of such an irregular pulse placement.
Similar to pulse placement, the idea behind antenna placement is to perform the angle of arrival estimation task with a smaller number of antennas. In colocated MIMO radars, transmitters and receivers are usually placed uniformly along a line with a spacing of half a wavelength. However, we want to systematically design the transmitter-receiver positions to obtain a nonuniform array with a reduced number of transmit/receive elements. In particular, we start with a large set of candidate locations where we can place the antennas. Then, we select the best subset out of those locations in order to achieve a desired estimation performance. This antenna placement procedure helps to reduce the hardware costs and computational complexity, while maintaining a prescribed performance. Figure \ref{antenna_selection_sample} illustrates an irregular transmitter and receiver placement in comparison with a uniform placement. 
\begin{figure}
\centering
\psfrag{irregular pulse placement}{\scriptsize{irregular pulse placement}}
\psfrag{uniform pulse placement}{\scriptsize{uniform pulse placement}}
\psfrag{irregular antenna selection}{\scriptsize{irregular antenna placement}}
\psfrag{uniform antenna selection}{\scriptsize{uniform antenna placement}}
\psfrag{Transmitters}{\scriptsize{Transmitters}}
\psfrag{Receivers}{\scriptsize{Receivers}}
\subfigure[] {\includegraphics[width=.4\textwidth]{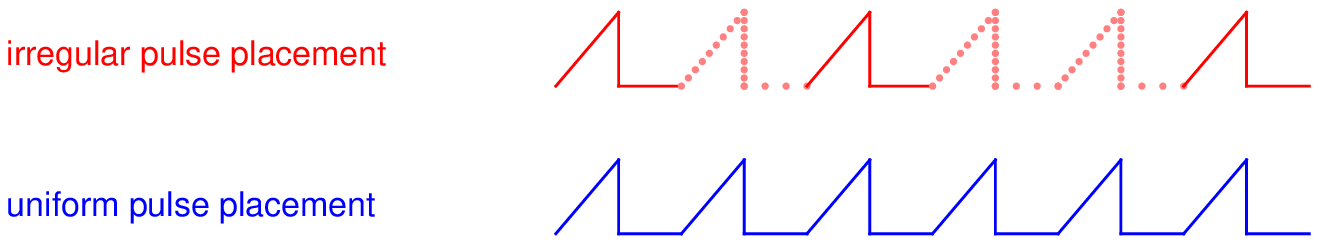}
\label{pulse_selection_sample}} \quad
\subfigure[] {\includegraphics[width=.4\textwidth]{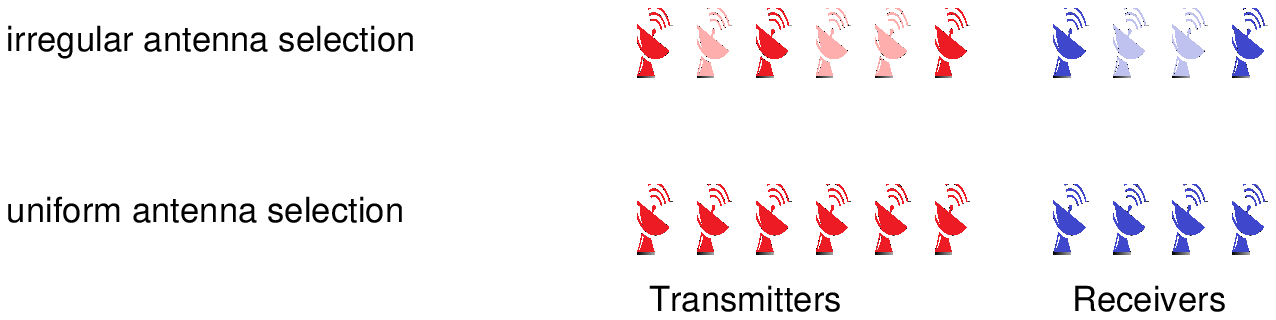}
\label{antenna_selection_sample}} \\
\caption{ \footnotesize{(a) Uniform vs. irregular pulse transmission. Dashed lines indicate that the pulses are not transmitted in that interval, (b) Uniform vs. irregular antenna placement. Light colored antennas indicate that the antennas are not used during transmission or reception.}}
\label{}
\end{figure}

The aim of this paper is to find the optimal antenna and pulse placement that guarantees a desired angle of arrival and velocity estimation accuracy. It should be noted that the performance is traded off with cost when the number of antennas and/or pulses are reduced. 
\subsection{Contributions}
In this paper, we study further the joint antenna and pulse placement for a colocated MIMO radar for angle of arrival and velocity estimation based on sparse sensing \cite{chepuri2016sparse} by extending our previous work \cite{TohidiAsilomar2017} by a more detailed signal model, a derivation of the two-target CRLB, and a submodular optimization framework as a fast and reliable alternative approach to convex optimization.

The conventionally used performance measure, namely, the single-target CRLB only considers the mainlobe width of the ambiguity function but does not take into account the sidelobe level. Therefore, we derive the CRLB for two targets, which takes into account both the mainlobe width and sidelobe level (particularly the sidelobe level around the mainlobe) of the ambiguity function. Based on this two-target CRLB, we propose several performance measures and develop a number of algorithms for designing the optimal antenna and pulse placement of colocated MIMO radar systems. Firstly, single antenna pulse placement and single pulse MIMO radar antenna placement are presented as two specific cases of the problem. Then, we present the general case of joint antenna and pulse selection. Since the antenna-pulse selection is a combinatorial optimization, and is NP-hard \cite{joshi2009sensor}, we propose several suboptimal algorithms for solving the selection problem. One of the proposed approaches is based on submodular optimization. We prove the submodularity of the employed performance measure, which enables us to use a greedy algorithm to perform the selection with near-optimality guarantees. The second proposed approach is based on convex optimization, where by employing some relaxations, the optimization is turned into a convex program. However, due to these relaxations, a suboptimal solution is obtained in general. The advantages and disadvantages of these two approaches are explained in more detail in Section IV.

\subsection{Outline and Notations}
The rest of the paper is organized as follows. In Section~II, the signal model is introduced. Section III provides the required preliminaries for this paper. The problem formulation is discussed in Section IV. Two basic examples to illustrate the concept are presented in Section V. The proposed algorithms for the most general form of the antenna-pulse selection problem is presented in Section VI. Simulation results are reported in Section VII. Finally, Section VIII concludes the paper.

We adopt the notation of using boldface lower (upper) case for vectors $\ba$ (matrices $\bA$). The transpose, Hermitian, and complex conjugate operators are denoted by the symbols $(.)^T$, $(.)^H$, and $(.)^*$, respectively. $\mathbb{R}^{N \times M}$ is the set of $N\times M$ real matrices. $\diag(\ba)$ indicates the diagonal matrix formed by the components of vector $\ba$ along the main diagonal. $\det\{.\}$ is the matrix determinant and $\tr\{.\}$ is the matrix trace operator. In addition, $||\bA||_0$ and $||\ba||_0$ are the number of non-zero entries of $\bA$ and $\ba$, respectively. If $\ba$ and $\bb$ are two vectors, then $\left<\ba,\bb\right>$ is the inner product between $\ba$ and $\bb$ (i.e., $\left<\ba,\bb\right> = {\ba^H\bb}$). $\lambda_{\rm max}\{\bA \}$ and $\lambda_{\rm min}\{ \bA \}$ are the maximum and minimum eigenvalues of the matrix $\bA$, respectively. Given a reference set $\mathcal{U}$ and a subset $\mathcal{A}\subseteq\mathcal{U}$, the absolute complement of $\mathcal{A}$ is denoted as $\mathcal{A}^\mathsf{c}$, i.e., $\mathcal{A}^\mathsf{c}$ = $\mathcal{U} \setminus \mathcal{A}$.

\section{Signal Model}
Consider a colocated MIMO radar with $R$ receivers placed along a line at coordinates $[d,2d,...,Rd]$, where $d$ is the inter-element spacing, which is assumed to be $\lambda/2$ ($\lambda$ is the wavelength). In addition, $I$ transmitters are placed along a line at coordinates $[-d,-2d,...,-Id]$. The $i$th transmitter can transmit a waveform $s_i(t)$ for $P$ times with PRI $T_P$. Note that $s_i(t)$ is non-zero only in the interval $[0,T_P]$. Assume $Q$ targets exist in the region of interest. Our aim is to estimate the position and velocity of the targets based on the received signals. In particular, the direction cosine, $u_q=cos\:\psi_q$ with $\psi_q$ being the $q$th target's angle of arrival and radial velocity, $v_q$ of the $q$th target, are the desired unknown parameters.

The noiseless baseband representation of the signal received at the $r$th receiver during the time interval $[pT_P,(p+1)T_P]$ due to all the targets is

\begin{equation}
x_{r}(t) = \sum_{q=1}^Q{\alpha_qh(t;v_q)\bs_{r,q}^T(t-pT_P)\bfi_{r}(u_q)},
\end{equation}
where $\alpha_q$ is the effect of the $q$th target's RCS and $h(t;v_q) = \exp(j4\pi{v_q}t/{\lambda})$ with $2v_q/\lambda$ being the Doppler frequency of the $q$th target. Due to the colocated configuration assumption, for each target, the Doppler frequency and RCS seen by all transmitter-receiver pairs are equal. Here, the assumption is that the RCS and propagation attenuation are constant during the observation interval (i.e., Swerling I model). Furthermore, $\bs_{r,q}(t) = [s_1(t-\tau_{1,q,r}),...,s_I(t-\tau_{I,q,r})]^T\in \mathbb{C}^I$ includes the received signal from all the transmitters, where 
\begin{equation} \nonumber
\begin{aligned}
\tau_{i,q,r} &= c^{-1}[(R_q-d_i\cos\psi_q)+(R_q-d_r\cos\psi_q)] \\&= c^{-1}[2R_q - (d_i+d_r)u_q]
\end{aligned}
\end{equation}
is the time delay of signal propagation between the $i$th transmitter, $q$th target, and $r$th receiver, and $c$ the speed of light. In addition, 
\begin{equation} \nonumber
\begin{aligned}
\bfi_{r}(u_q) &= [\exp(-j2\pi f_c \tau_{1,q,r}),...,\exp(-j2\pi f_c \tau_{I,q,r})]^T\in \mathbb{C}^I
\end{aligned}
\end{equation}
contains the related phase shifts with $f_c$ being the carrier frequency. In the expressions for $\bs_{r,q}(t)$ and $\bfi_{r}(u_q)$, $R_q$ is the $q$th target distance from the center of the coordinate system, and $d_i$ and $d_r$ are the positions of transmitter $i$ and receiver $r$ on the x-axis, respectively.

Linear frequency modulation (LFM) is selected for signaling and throughout the paper, orthogonal waveforms are transmitted by different transmitters. In fact, we adopt a set of LFM signals that have the same shape, but are slightly shifted in time which yields an efficient orthogonal transmission scheme \cite{6557996}. More specifically, we design the $i$th transmitted waveform as $s_i(t) = s(t-(i-1)t_{sh})$, where $t_{sh}$ is the time shift between adjacent LFM signals to achieve orthogonality and $s(t)$ is the baseband LFM waveform

\begin{equation}
s(t) = \exp\left(j\pi kt^2\right),
\end{equation}
where $k$ is the rate of sweeping the whole bandwidth for the pulse duration $T_C$, i.e., $k=B/T_C$ with $B$ being the signal bandwidth. Note that to satisfy the orthogonality condition, $t_{sh}$ should be selected larger than the time delay of the farthest target of interest. In addition, since all pulses from all antennas need to fit within a single PRI after reception, we need the condition $T_C+It_{sh}<T_P$. It should be pointed out that any set of orthogonal waveforms other than LFM signaling could also be employed for our model.

Employing $I$ matched filters (de-ramping plus filtering) matched to the $I$ transmit waveforms, the observed signal in the time interval $[pT_P,(p+1)T_p]$ from all the transmitters of the $r$th receiver $\bz_{r}(t) = [z_{r,1}(t),...,z_{r,I}(t)]^T$, after some simplifications is given by

\begin{equation}
\begin{aligned}
\bz_{r}(t) = \sum_{q=1}^Q{\alpha_qh(t;v_q) \beta_q(t-pT_P) \bfi_r(u_q)} + \be_{r}(t),
\end{aligned}
\end{equation}
where $\beta_q(t) = \rm{exp}$$\{j4\pi kt R_qc^{-1}\}$. Note that the entries of $\be_r(t)=[e_{r,1}(t),...,e_{r,I}(t)]^T$ are noise terms at the output of the $I$ matched filters at the $r$th receiver. Since the $I$ transmit waveforms are orthogonal, the entries of $\be_r(t)$ can be assumed independent. Thus, $e_{r,i}(t), i=1,...,I,$ are modeled as i.i.d. with distribution $\mathcal{N}(0,\sigma_e^2)$.

Sampling the observed signal with sampling period $T_s$, $N$ 
 samples per pulse are obtained, where the $n$th sample of the $p$th pulse related to the transmitter-receiver pair $(i,r)$, $z_{r,i,p}[n]$, is given by
\begin{equation}
\begin{aligned}
z_{r,i,p}[n] &= z_{r,i}(pT_p+nT_s) \\&=  \sum_{q=1}^Q{\alpha_qh(pT_P+nT_s;v_q) \beta_q(nT_s) \phi_{r,i}(u_q)} \\&+ e_{r,i}(pT_P+nT_s) \\
&=  \sum_{q=1}^Q{y_{r,i,p}^{(q)}[n]} + e_{r,i,p}[n] =  y_{r,i,p}[n] + e_{r,i,p}[n],
\end{aligned}
\label{equ:Qtarget}
\end{equation}
where $\phi_{r,i}(u_q)$ is the $i$th entry of $\bfi_r(u_q)$. Collecting all the measurements, we have a non-linear model of the form

\begin{equation}
\bz =  \by(\btheta) + \be \in \mathbb{C}^{NRIP},
\label{equ:vectorform}
\end{equation}
where the unknown parameters of the $q$th target are represented by the vector $\btheta_q = [u_q,v_q]^T\in\mathbb{R}^2$. So, $\btheta = [\btheta_1^T,...,\btheta_Q^T]^T\in \mathbb{R}^{2Q}$ collects all the unknown parameters.

\section{Performance metric}
As seen in Section II, the measurements are a non-linear function of the unknown parameters. As a result, the mean squared error (MSE) does not admit a closed form expression \cite{ranieri2014near}. On the other hand, the CRLB provides a lower bound on the variance of any unbiased estimator and can be used to evaluate the performance of unbiased estimators. Since the CRLB can always be computed in closed form, it is employed as an estimation performance criterion.

It is well known that under the regularity condition, the covariance of any unbiased estimator $\hat{\btheta}$ of the unknown vector ${\btheta}$ is lower bounded by the CRLB as \cite{Kay2008statistical,chepuri2015sparsity}:

\begin{equation}
\mathbb{E}\{(\btheta-\hat{\btheta})(\btheta-\hat{\btheta})^H\} \geq \bC(\btheta)=\bF^{-1}(\btheta),
\end{equation}
where $\bC$ is the CRLB matrix and $\bF$ is the Fisher information matrix (FIM), which can be calculated as \cite{chepuri2015sparsity}

\begin{equation}
\begin{aligned}
\bF(\btheta) &= -\mathbb{E}\left\{\frac{\partial^2 \ln \: p(\mathbf{z};\btheta)}{\partial\btheta\: \partial\btheta^H}\right\} \\&= \mathbb{E}\left\{\frac{\partial \ln \: p(\mathbf{z};\btheta)}{\partial\btheta}\frac{\partial \ln \: p(\mathbf{z};\btheta)}{\partial\btheta^H}\right\} \in \mathbb{C}^{2Q \times 2Q},
\end{aligned}
\label{equ:Fsum1}
\end{equation}
where $p(\bz;\btheta)$ is the probability density function (pdf) of $\bz$ parameterized by the unknown vector $\btheta$. Due to uncorrelated errors, the log-likelihood $\ln \: p(\mathbf{z};\btheta)$ is additive, and given by
\begin{equation}
\ln\: p(\mathbf{z};\btheta) = \sum_{r=1}^{R} \sum_{i=1}^{I} \sum_{p=1}^{P}{\sum_{n=1}^{N}{\ln\: p(z_{r,i,p}[n];\btheta)}}.
\label{equ:lpsum1}
\end{equation}
Due to~\eqref{equ:lpsum1} the FIM in~\eqref{equ:Fsum1} is also additive and can be written as~\cite{chepuri2015sparsity}

\begin{equation}
\bF(\btheta) = \sum_{r=1}^R{\sum_{i=1}^I{\sum_{p=1}^P{\bF_{r,i,p}(\btheta)}}},
\label{equ:fsum1}
\end{equation}
where $\bF_{r,i,p}(\btheta)$ is the FIM of the $p$th pulse due to the transmitter-receiver pair $(i,r)$ for all the $N$ samples, i.e.,
\begin{equation}
\begin{aligned}
\bF_{r,i,p}(\btheta) &= \sum_{n=1}^N{\bF_{r,i,p,n}(\btheta)} \\&= \frac{4}{\sigma_e^2}\sum_{n=1}^N{\frac{\partial y_{r,i,p}[n]}{\partial \btheta}\frac{\partial y_{r,i,p}[n]}{\partial \btheta^H}}.
\label{equ:fsum2}
\end{aligned}
\end{equation}

One of the contributions of this paper is to introduce the CRLB for two targets as a better performance measure for antenna/pulse selection in comparison with the CRLB for a single target. The reasoning is based on the fact that, for two targets, the correlation between the signals echoed from the targets is taken into account and both the estimation accuracy (mainlobe width of the ambiguity function) and the robustness against ambiguities (the sidelobe level around the mainlobe of the ambiguity function) are accounted for in the cost function that is used to optimize the antenna/pulse placement. In contrast, in the single-target CRLB, only the estimation accuracy of one target is considered, which essentially makes the mainlobe width as narrow as possible, ignoring the occurrence of ambiguities due to the sidelobes (i.e., a high sidelobe level around the mainlobe might occur due to the nonuniform antenna/pulse placement). Due to these reasons, we employ the two-target CRLB as a performance measure in our optimization problems. In the following, the FIMs for the two-target case for all the $2Q$ unknown parameters are derived. 
\subsection*{Two-target CRLB}
In this scenario, two targets are considered in the region of interest and the CRLB is derived for these two targets. For $Q=2$, (\ref{equ:Qtarget}) simplifies to
\begin{equation}
\begin{aligned}
z_{r,i,p}[n] &= y_{r,i,p}[n] + e_{r,i,p}[n] \\
&= \alpha_1h(pT_P+nT_s;v_1)\beta_1(nT_s)\phi_{r,i}(u_1) \\&+ \alpha_2h(pT_P+nT_s;v_2)\beta_2(nT_s)\phi_{r,i}(u_2) \\&+ e_{r,i,p}[n].
\end{aligned}
\end{equation}
The partial derivative of the signal w.r.t. the unknowns is given by
\begin{equation}
\frac{\partial y_{r,i,p}[n]}{\partial \btheta} = \frac{j2\pi}{\lambda}
\begin{pmatrix} (d_i+d_r)y^{(1)}_{r,i,p}[n] \\ 2(nT_s+pT_P)y^{(1)}_{r,i,p}[n] \\ (d_i+d_r)y^{(2)}_{r,i,p}[n] \\ 2(nT_s+pT_P)y^{(2)}_{r,i,p}[n] \end{pmatrix},
\label{equ:partialderiv}
\end{equation}
where $y^{(1)}_{r,i,p}[n]$ and $y^{(2)}_{r,i,p}[n]$ are the noiseless signal terms due to the first and second target, respectively. This allows us to compute the Fisher information matrix as
\begin{equation}
\bF_{r,i,p,n} = \begin{bmatrix}\bJ_1^{(1)} &\bJ_2 \\ \bJ_2^* &\bJ_1^{(2)}  \end{bmatrix},
\label{equ:fpri2}
\end{equation}
where $\bJ_1^{(q)}$ is the single-target FIM for the $q$th target given by
\begin{equation}
\begin{aligned}
\bJ_1^{(q)} &= \frac{32\pi^2\alpha_q^2}{\lambda^2\sigma_e^2}
\\&\begin{bmatrix} &\frac{1}{2}(d_i+d_r)^2 &(d_i+d_r)(nT_s+pT_P) \\ &(d_i+d_r)(nT_s+pT_P) &2(nT_s+pT_P)^2  \end{bmatrix},
\end{aligned}
\label{equ:fpri}
\end{equation}
where $\frac{\alpha_q^2}{\sigma_e^2}$ is the signal to noise ratio (SNR). The expression (\ref{equ:fpri}) implies that the single-target FIM is independent of $\btheta$ and thus $\bF_{r,i,p}$ in (\ref{equ:fsum2}) and $\bF$ in (\ref{equ:fsum1}) are also independent of $\btheta$ for a single-target scenario. In addition, $\bJ_2\in \mathbb{C}^{2\times2}$ is the cross correlation between the signals of the two targets calculated as follows
\begin{equation}
\begin{aligned}
\bJ_2 &= \frac{32\pi^2\alpha_1\alpha_2^*}{\lambda^2\sigma_e^2}\\&\begin{bmatrix} \frac{1}{2}(d_i+d_r)^2 &(d_i+d_r)(nT_s+pT_P) \\ (d_i+d_r)(nT_s+pT_P) &2(nT_s+pT_P)^2  \end{bmatrix}\\&h(pT_p+nT_s;v_1-v_2)\phi_{r,i}(u_1-u_2).
\label{equ:f22}
\end{aligned}
\end{equation}
It is easy to see that the unknown parameters appear only in the cross correlation terms between the two targets. Moreover, the Fisher information matrix only depends on the difference between direction cosines and on the difference between the velocities. As will be seen later, we use this characteristic to reduce the search space. The final expression of the FIM is calculated from (\ref{equ:fsum1}) and (\ref{equ:fsum2}).

The calculated CRLB and Fisher information matrices are useful when all the unknown parameters in $\btheta$ have the same units. However, in this paper we have parameters with different units such as direction cosine and velocity (i.e., cosine of radians and $m/s$). Moreover, the desired estimation accuracy for the two targets might be different as well. In fact, if the estimation error of one of the parameters is much higher than that of the others, then that parameter would play the dominant role in the optimization problem and the selection would be based on that parameter solely. As a result, the final design would not be satisfactory in terms of the other parameters' estimation accuracy. Thus, to make a balance among the parameters, we introduce compensation weights and modify the CRLB matrix as
\begin{equation}
\bC'(\btheta) = \diag({\boldsymbol{\gamma}})\bC(\btheta)\diag({\boldsymbol{\gamma}}),
\end{equation}
where $\gamma_i^2$ is the known compensation weight for the $i$th unknown parameter which depends on the application and $\boldsymbol{\gamma}=[\gamma_1,...,\gamma_{2Q}]^T$. Similar to the modified CRLB matrix, we can define the modified Fisher matrix as
\begin{equation}
\bF'(\btheta)= \diag^{-1}\{\boldsymbol{\gamma}\}\bF(\btheta)\diag^{-1}\{\boldsymbol{\gamma}\}.
\end{equation}

As seen in (\ref{equ:fpri2}) and (\ref{equ:f22}), the two-target CRLB (unlike the single-target CRLB) is a function of the unknown parameters. Therefore, while optimizing the CRLB, in order to keep the optimization problem tractable, we grid the region of interest into a discrete set of points for which we can evaluate the CRLB. Since the two-target CRLB only depends on the difference between the direction cosine and velocity [cf. (\ref{equ:f22})], we only grid these differences in the region of interest resulting in the set $\mathcal{D}=\{\delta\btheta_1,...\delta\btheta_D\}$, where $\delta\btheta_d$ denotes the $d$th difference between the two targets' parameters. Hence, a 1-D scan of the difference of these parameters suffices to obtain all feasible two-target CRLB matrices. Since the CRLB is a matrix, in the next section, we introduce scalar measures of the CRLB as optimization criteria to design the antenna/pulse placement that should be optimized over these grid points.

\section{Problem formulation}
In this work, on one hand, we want to reduce the sensing cost, i.e., reduction in the number of transmitters, pulses, and receivers, while guaranteeing a desired estimation error. The $R$ receivers next to the $I$ transmitters and $P$ pulses of each transmitter are the parameters that affect both the estimation quality and the sensing cost (hardware and computational complexity). Therefore, the selection problem might be posed in the following two ways. In the first problem, we minimize the sensing cost with a constraint on the estimation error. In the second problem, we minimize the estimation error with a constraint on the sensing cost. Since we know how many antennas are available, we focus on the second problem. The other case can be tackled in a similar way when the desired estimation error is known. 

We model the sensing framework by introducing the following sets: $(i)$ the set of selected transmitter-pulses $\mathcal{A}\subseteq\mathcal{P}$, where $\mathcal{P}=\{a_{1,1},a_{1,2},...,a_{I,P}\}$ is the set of all the $I P$ transmitter-pulses, and $(ii)$ the set of selected receivers $\mathcal{B}\subseteq\mathcal{R}$, where $\mathcal{R}=\{b_1,...,b_R\}$ is the set of all the $R$ receivers. In addition, we further introduce the transmitter-pulse selection matrix and the receiver selection vector for easier notation. That is, $\bA\in\mathcal{M\{\mathcal{A}\}}$ is a transmitter-pulse selection matrix defined by the set of selected transmitter-pulses $\mathcal{A}$, with $\mathcal{M\{\mathcal{A}\}}$ defining the singleton:
\begin{equation}
\mathcal{M\{\mathcal{A}\}} = \{\bA|\bA\in\{0,1\}^{I\times P}; [\bA]_{i,p}=1 \iff a_{i,p}\in \mathcal{A}\}.
\end{equation}
Here, the $(i,p)$th entry of $\bA$, denoted by $[\bA]_{i,p}$, is equal to $1(0)$ if the $i$th transmitter transmits the $p$th pulse (or not). In a similar way, we introduce $\bb\in\mathcal{V\{\mathcal{B}\}}$ as the receiver selection vector defined by the set of selected receivers $\mathcal{B}$, with $\mathcal{V\{\mathcal{B}\}}$ defining the singleton:
\begin{equation}
\mathcal{V\{\mathcal{B}\}} = \{\bb|\bb\in\{0,1\}^R; [\bb]_r=1 \iff b_r \in \mathcal{B} \},
\end{equation}
where the $r$th entry of $\bb$, denoted by $[\bb]_{r}$, is equal to $1(0)$ if the $r$th receiver is (not) selected.

Note that these definitions are in the most general form with complete freedom to select any of the transmitters, pulses, or receivers. For specific purposes, which are discussed later on, one may consider only the receiver selection vector (i.e., by employing all the transmitters transmitting all the pulses), only the transmitter-pulse selection matrix (i.e., by employing all the receivers), or the selection vectors for receivers and transmitters (where we assume that each active transmitter would transmit all the pulses). As there is a one-to-one relation between the matrix (vector) and the set $\mathcal{A}$ ($\mathcal{B}$), from now on we employ them interchangeably.

Using the selection variables $\bA$ and $\bb$, the collected measurements can be written as follows
\begin{equation}
\begin{aligned}
z_{r,i,p}[n] = [\bb]_r  [\bA]_{i,p} (y_{r,i,p}[n] + e_{r,i,p}[n]),
\end{aligned}
\label{equ:select1}
\end{equation}
where depending on whether a transmitter-receiver-pulse is selected, the measurement will be collected. It is easy to show that the Fisher information matrix [cf.~\eqref{equ:fsum1}] will be modified based on (\ref{equ:select1}) as
\begin{equation}
\bF(\bA,\bb,\delta\btheta) = \sum_{r=1}^R{[\bb]_r\sum_{i=1}^I{\sum_{p=1}^P{[\bA]_{i,p}\bF_{r,i,p}(\delta\btheta)}}}.
\label{equ:sumfisher}
\end{equation}
Because the two-target FIM is used, the difference of the two targets' parameters is considered.
The most general form of the optimization problem can be mathematically formulated as

\begin{equation}
\begin{aligned}
& \underset{\mathcal{A}\subseteq\mathcal{P},\mathcal{B}\subseteq\mathcal{R}}{\min}
& & g(f(\bA,\bb,\delta\btheta),\mathcal{D}) \\
& \text{subject to}
&& \bA\in \mathcal{M\{\mathcal{A}\}}, |\mathcal{A}| \leq K_P, \\
&&& \bb \in \mathcal{V\{\mathcal{B}\}}, |\mathcal{B}| \leq K_R, 
\end{aligned}
\label{equ:firstmin}
\end{equation}
where $K_P$ and $K_R$ are the maximum number of transmitter-pulses and receivers, respectively. Here, $f(\bA,\bb,\delta\btheta)$ is a function of the estimation error at the grid point $\delta\btheta \in \mathcal{D}$, $g(\cdot)$ is a general composition of the function $f(\cdot)$ evaluated over all the grid points in $\mathcal{D}$, e.g., maximization or average of $f(\bA,\bb,\delta\btheta)$ for all $\delta\btheta \in \mathcal{D}$, the sets $\mathcal{A}$ and $\mathcal{B}$ represent the selected transmitters-pulses and receivers, respectively. To guarantee an estimation accuracy level over all the grid points, $g(\cdot)$ should be the max function. To guarantee an average accuracy level, $g(\cdot)$ can be defined as the average over $\mathcal{D}$. Since, (\ref{equ:firstmin}) is a combinatorial optimization problem and NP-hard in nature \cite{joshi2009sensor}, we use convex relaxation techniques to employ convex optimization and surrogate submodular functions to employ greedy optimization as two general approaches to solve this problem.

As convex optimization requires a convex cost function and convex constraints, we require a function $f(\bA,\bb,\delta\btheta)$ that is convex and that the non-convex sets $\mathcal{A}$ and $\mathcal{B}$ are relaxed to obtain convex constraints.
Both the maximum and expected value for $g(\cdot)$ could be employed for convex optimization. By reformulation in its epigraph form, we will use maximization for convex optimization which in general leads to a semidefinite program (SDP) that has a cubic computational complexity.

The other approach to solve this problem is to employ submodular optimization which has been shown useful to solve combinatorial optimization problems \cite{khuller1999budgeted,coutino2017near,contino2017near}. A set function $f:2^{|\mathcal{N}|} \rightarrow \mathbb{R}$ is called submodular, if and only if, for every $\mathcal{S}_1\subseteq \mathcal{S}_2 \subseteq \mathcal{N}$ and $u\in \mathcal{S}_2^{^\mathsf{c}}$, it shows the property of diminishing returns, i.e.,
\begin{equation}
f(\mathcal{S}_1\cup\{u\}) - f(\mathcal{S}_1) \geq f(\mathcal{S}_2\cup\{u\}) - f(\mathcal{S}_2).
\end{equation}
It is known that, if the function $f$ is nondecreasing, normalized and submodular, then by employing a conceptually simple greedy algorithm, which starts with an empty (full) set, and in iteration $i$, adds the best (removes the worst) element to (from) the set, to maximize the function (not minimize), it is possible to obtain an $1-1/e$ approximation of the optimum value of $\underset{\mathcal{S}\subseteq \mathcal{N}, |\mathcal{S}|\leq K}{\max} \;f(\mathcal{S})$ for some cardinality $K$ \cite{sviridenko2004note}. Thus, if $g(f(\bA,\bb,\btheta),\mathcal{D})$ satisfies this property, then we can use the greedy algorithm with near-optimality guarantees.

In essence, we could say that the advantage of the convex optimization approach is its higher freedom in terms of objective functions and constraints. On the other hand, submodular optimization generally leads to low computational methods which makes it appropriate for large-scale scenarios. In Section VI, both algorithms will be explained in more detail.

\begin{figure*}
\centering
\psfrag{5 pulses - 1 target}{\tiny{5 pulses - 1 target}}
\psfrag{5 pulses - 2 targets}{\tiny{5 pulses - 2 targets}}
\psfrag{8 pulses - 1 target}{\tiny{8 pulses - 1 target}}
\psfrag{8 pulses - 2 targets}{\tiny{8 pulses - 2 targets}}
\psfrag{5 pulses-1 target}{\tiny{5 pulses - 1 target}}
\psfrag{5 pulses-2 targets}{\tiny{5 pulses - 2 targets}}
\psfrag{8 pulses-1 target}{\tiny{8 pulses - 1 target}}
\psfrag{8 pulses-2 targets}{\tiny{8 pulses - 2 targets}}
\psfrag{Ambiguity function (dB)}{\scriptsize{Ambiguity function (dB)}}
\psfrag{Velocity (m/s)}{\scriptsize{Velocity (m/s)}}
\psfrag{tr(C)}{\scriptsize{$\tr(\bC)$}}
\psfrag{# transmit pulses}{\scriptsize{\# transmit pulses}}
\psfrag{E-optimality}{\tiny{E-optimality}}
\psfrag{D-optimality}{\tiny{D-optimality}}
\psfrag{A-optimality}{\tiny{A-optimality}}
\psfrag{MSE of MLE}{\tiny{MSE of MLE}}
\subfigure[] { \includegraphics[width=.3\textwidth]{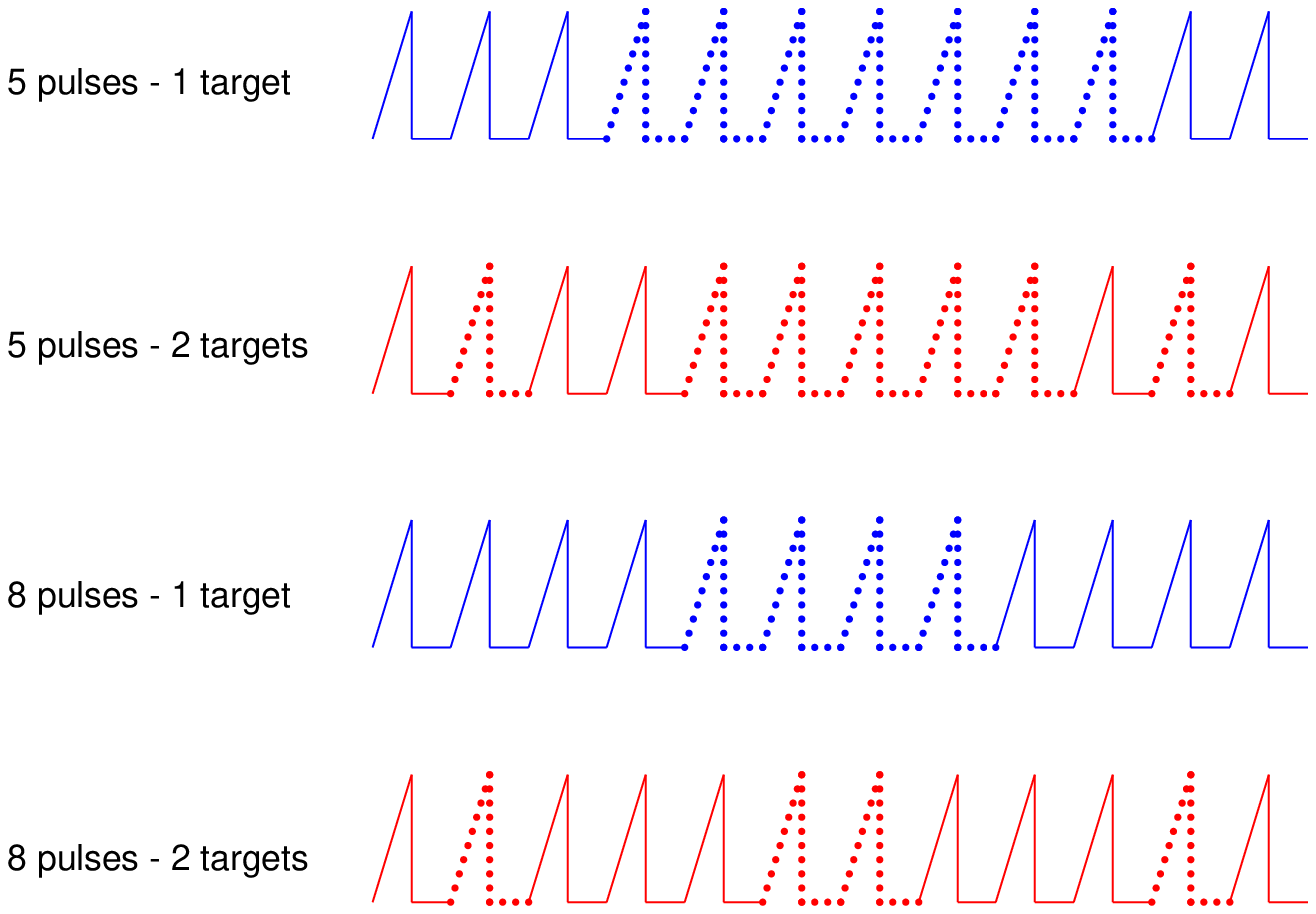}\psfrag{5}{$nd$}
\label{pulse_selection}} \quad
\subfigure[] {\includegraphics[width=.3\textwidth]{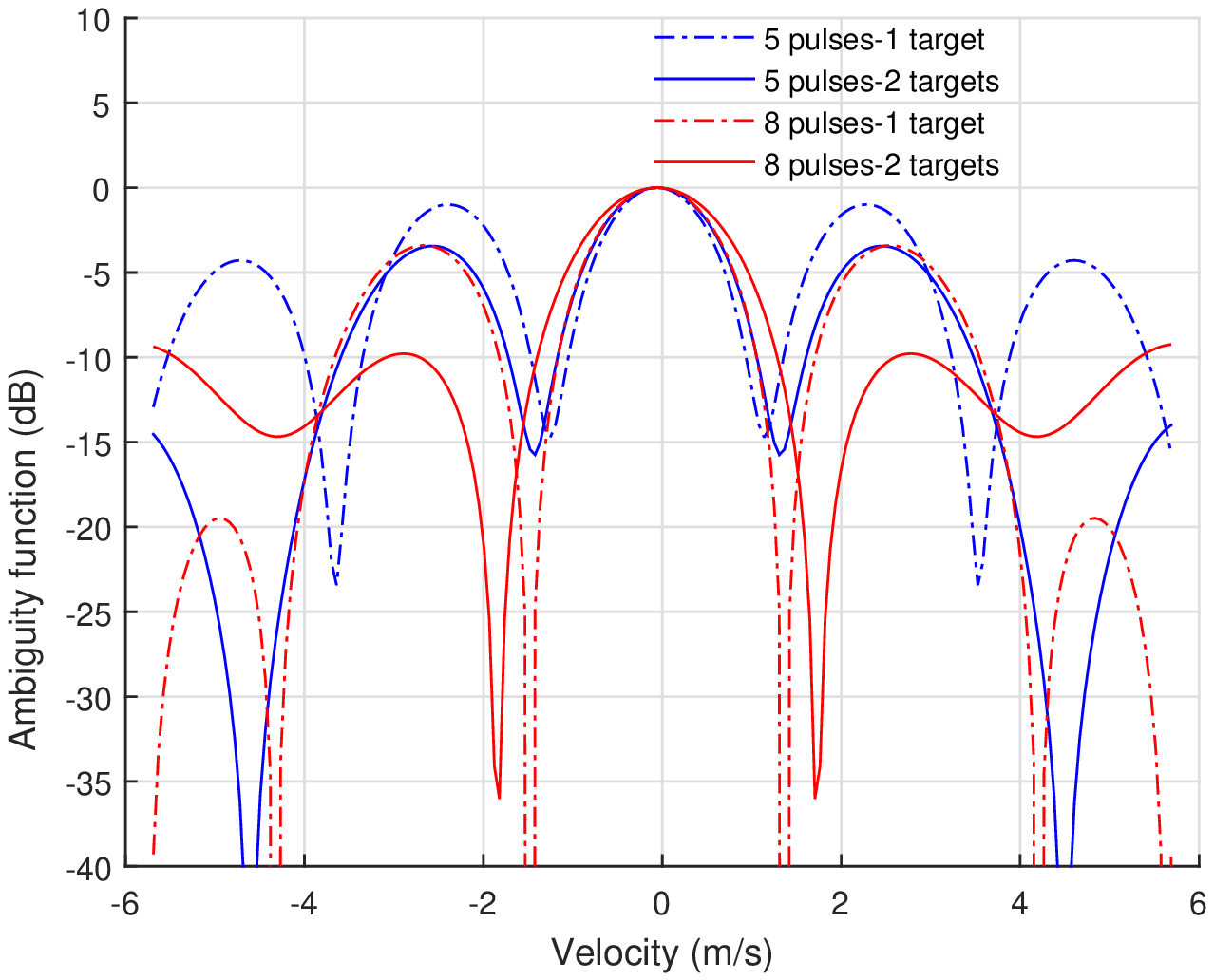}
\label{pulse_selection_patt}} \quad
\subfigure[] {\includegraphics[width=.3\textwidth]{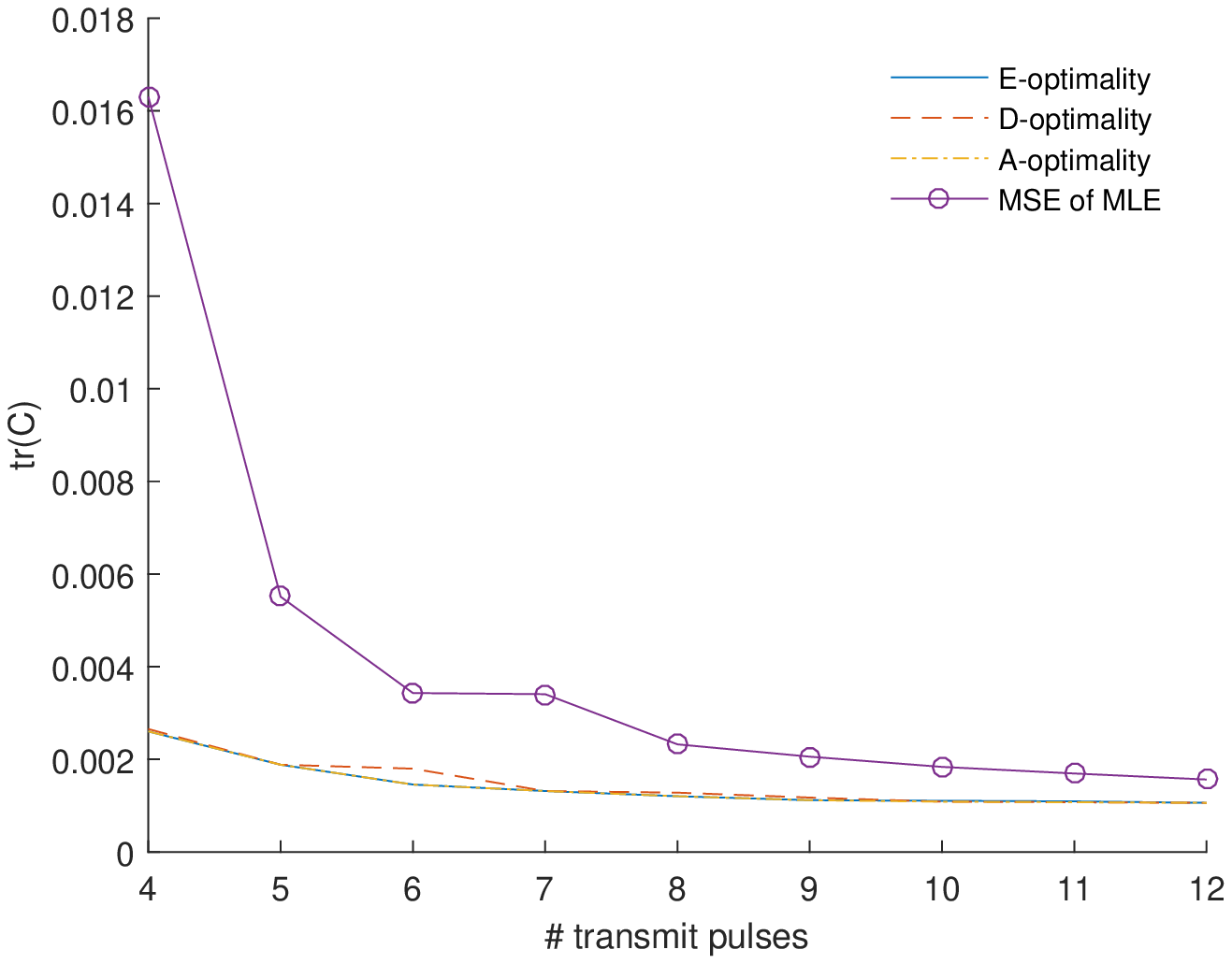}
\label{pulse_selection_MSE}} \\
\caption{\footnotesize{Single antenna pulse selection based on velocity estimation error for a total of $P=12$ pulses (a) selected pulses, (b) velocity ambiguity function, (c) scalar measures performance.}}
\label{fig2}
\end{figure*}

\subsection*{Scalar measures of the CRLB}
Since the CRLB is a matrix, it is not possible to employ it as an objective function for the optimization problem. Thus, in the following, scalar measures of the CRLB (or the FIM) that are employed in the proposed algorithms are introduced.

\begin{itemize}
	\item A-optimality: minimize the trace of the CRLB, i.e., $f(\bA,\bb,\delta\btheta) = \tr\{\bC(\bA,\bb,\delta\btheta)\}$.
	\item D-optimality: minimize the determinant of the CRLB, i.e., $f(\bA,\bb,\delta\btheta) = \log \det(\bF(\bA,\bb,\delta\btheta))$.
	\item E-optimality: minimize the maximum eigenvalue of the CRLB, i.e., $f(\bA,\bb,\delta\btheta) = \lambda_{max}\{\bC(\bA,\bb,\delta\btheta)\}$.
	\item Modified frame potential: the frame potential (FP) has been introduced in \cite{ranieri2014near} to measure orthogonality between vectors of a frame. 
	Due to the non-linearity of our model, we employ the first derivative of the measurements $\partial y_{r,i,p}[n]/\partial\boldsymbol\theta$ as defined in~\eqref{equ:partialderiv}. For each $i$,$r$, and $p$ which is selected, we have $N$ entries in the measurement matrix. Thus, the FP for our system model would be
	\begin{equation}
	{\rm{FP}}(\mathcal{S},\delta\btheta) = \sum_{y,y'\in\mathcal{Y\{\mathcal{S}\}}}\sum_{n=1}^N\left|\left<\partial y[n],\partial y'[n]\right>\right|^2,
	\label{equ:FP1}
	\end{equation}
	where $\partial y[n]$ is a simplified notation for $\partial y[n]/\partial\boldsymbol\theta$ for $y\in\mathcal{Y}\{\mathcal{S}\}$, $\mathcal{S}=\mathcal{A}\cup\mathcal{B}$ is the union set of transmitter pulses in set $\mathcal{A}$,
	and receivers in set $\mathcal{B}$,
	and $\mathcal{Y\{\mathcal{S}\}}=\{y_{r,i,p}[n], n\in\{1,\ldots,N\} \,| \,r\iff b_r \in \mathcal{B},(i,p) \iff a_{i,p} \in \mathcal{A}\}$ is the set of measurements due to the transmitter-pulses and receivers of $\mathcal{S}$ presented in \eqref{equ:Qtarget}. Even though the dependency with respect to $\delta\btheta$ is not explicitly stated in~\eqref{equ:FP1}, substituting \eqref{equ:partialderiv} in \eqref{equ:FP1}, it is straightforward to show that~\eqref{equ:FP1} is a function of the parameters difference vector, $\delta\btheta$.
	It has been shown that the FP performs the best under equal row norms and that the minimization of the FP and the MSE is related. However, in this problem, rows have different norms. On one hand, rows with lower norms are prioritized by the FP, but on the other hand, rows with higher norms contribute more to the estimation accuracy. Thus, we propose to normalize the rows and call the related FP the modified FP (MFP), which is given by
	\begin{equation}
	\begin{aligned}
	&\tilde{\rm{FP}}(\mathcal{S},\delta\btheta) =\\& \sum_{y,y'\in\mathcal{Y\{\mathcal{S}\}}}\sum_{n=1}^N\left|\frac{\left<\partial y[n],\partial y'[n]\right>}{\left<\partial y[n],\partial y[n]\right>\left<\partial y'[n],\partial y'[n]\right>}\right|^2.
	\end{aligned}
	\end{equation}
\end{itemize}

The above mentioned measures are employed as cost functions in different algorithms, which are presented in the following. It would be shown later that some are appropriate for convex optimization, while others are good for submodular optimization. It should be noted that each of these measures has some advantages and disadvantages. In other words, none of them are the best in general and based on the application and requirements, one may employ one or another. Finally, as mentioned before, the developed design approach can be adapted to widely-separated MIMO radars. Although there would be some minor changes in the signal model and the CRLB derivation, the overall idea would be the same and similar algorithms would be applicable.

\section{Two basic examples}
In this section, by explaining two simple examples, the idea behind this work is illustrated. In addition, some useful insights can be obtained from these examples. It should be mentioned that these are just small-scale examples for illustrating the general idea. As a result, an exhaustive search is used for solving the optimization problem~\eqref{equ:firstmin}. The proposed algorithms for large-scale problems are explained in Section VI.

\subsection{Single transmitter-receiver pair and multiple pulses}
In this example, we consider the problem of a single transmitter-receiver pair which is able to transmit $P$ identical pulses. The aim is to compare the estimation accuracy between employing all the pulses or just a few pulses after an appropriate selection. Omitting the transmitter and receiver indices, the measured signal for the $p$th pulse would be

\begin{equation}
\begin{aligned}
z_{p}[n] = [\ba]_{p} (y_{p}[n] + e_{p}[n]),
\end{aligned}
\end{equation}
where the matrix $\bA$ is now a column vector $\ba$, as a single transmitted is selected. In addition, as a single receiver is selected the vector $\bb$ is now omitted.
The optimization problem~\eqref{equ:firstmin} then simplifies to
\begin{equation}
\begin{aligned}
\underset{\ba}{\min} \;\;\underset{\delta\boldsymbol\theta \in \mathcal{D}}{\max}
&\;f(\ba,\delta\btheta) \\
\text{subject to}
& \;  ||\ba||_0 \leq K_P, \ba \in \{0,1\}^P,
\end{aligned}
\end{equation}
where $f(\ba,\delta\btheta)$ is one of the aforementioned measures, $\delta\btheta=v_1-v_2$ (since we are dealing with a single antenna pair there is no angle estimation, and the direction cosine is not considered as an unknown parameter), $\ba$ is the pulse selection vector, and $K_P$ is the constraint on the number of transmitted pulses. We solve this problem by performing an exhaustive search over all the possible combinations of pulses for both the single- and two-target CRLB criterion. In Figure~\ref{fig2}, the result of pulse selection on the velocity estimation error is represented where $P=12$ pulses in total are considered. In one case 5 and in the other case 8 pulses are selected. The result of pulse selection for these two cases for the single and two-target CRLB using A-optimality as the performance measure is shown in Figure \ref{pulse_selection}. It is clear that for the single-target CRLB case, the selection prioritizes the edges. However, for the two-target CRLB case, edge pulses are combined with intermediate pulses. This difference in pulse pattern causes the difference in the velocity ambiguity function which is depicted in Figure \ref{pulse_selection_patt}. It can be seen that, employing the two-target CRLB reduces the sidelobe level (especially for the sidelobes close to the mainlobe) at the price of a wider beamwidth. Finally, Figure \ref{pulse_selection_MSE} shows the trace of the two-target CRLB for different cost functions. Definitely, A-optimality is performing better than the others, because both the optimization cost function and the evaluation measure are the same. However, based on the plot, it turns out that all measures are performing similarly. In addition, the MSE of the maximum likelihood estimator (MLE) is also plotted for the optimal subset of pulses based on A-optimality, which shows the introduced measure is a good representative for the MSE. Note that although the optimization of the MSE of the estimator was the original aim, the MSE does not admit a closed form which makes it difficult to optimize. In contrast, calculating the surrogate measures we mentioned before is straightforward and based on Figure \ref{pulse_selection_MSE}, we observe that they are consistent with the MSE.

\subsection{MIMO radar and single pulse}
In this scenario, we are investigating another phenomenon, which is the effect of the antenna positions on the target angle estimation error. Since the number of pulses does not play a role in this example, a single pulse is considered for simplicity. The aim is to find the optimal antenna placement for a maximum angle estimation accuracy using different numbers of antennas. The optimization problem is as follows

\begin{equation}
\begin{aligned}
\underset{\ba,\bb}{\min} \; \underset{\delta\boldsymbol\theta \in \mathcal{D}}{\max}
& \; f(\ba,\bb,\delta\boldsymbol\theta) \\
\text{subject to}
& \; ||\ba||_0 \leq K_I, \ba \in \{0,1\}^I, \\
& \; ||\bb||_0 \leq K_R, \bb \in \{0,1\}^R,
\end{aligned}
\label{equ:pulse=1}
\end{equation}
where $f(\ba,\bb,\delta\btheta)$ is one of the different measures, $\delta\btheta = u_1-u_2$, $\ba$ and $\bb$ are the transmitter and receiver selection vectors, respectively, and $K_I$ and $K_R$ are the total number of selected transmitters and receivers, respectively. Similar to the previous example, the optimization is solved by performing an exhaustive search over all possible combinations of transmitters and receivers.

As an example, we perform the optimization for a total of 8 transmitters and 4 receivers and only consider A-optimality. Figure \ref{pulse=1} represents the result for two cases: 4 transmitters combined with 3 receivers and 6 transmitters combined with 2 receivers for the single and two-target CRLB. The selected antennas are depicted in Figure \ref{pulse=1_selection} for these four cases. As for the single antenna pulse selection example, the selected antennas for the single-target CRLB have a tendency to appear at the edges. However, for the two-target CRLB, antennas from both the edges and the middle of the array are selected. In addition, Figures \ref{pulse=1_patt43} and \ref{pulse=1_patt62} compare the beampatterns for the single and two-target CRLB. For both patterns, the sidelobe levels close to the mainlobe are reduced when the two-target CRLB is used in comparison with the examples obtained using the single-target CRLB. However, in Figure \ref{pulse=1_patt62}, higher sidelobes appear further away from the mainlobe. This effect is due to the fact that the sidelobes close to the mainlobe cause an ambiguity in distinguishing the two targets whereas the other sidelobes do not. Thus, the antenna selection focuses more on this issue. Note that it is possible to apply different weights to different u-coordinates in order to emphasize some specific regions in the beampattern.

\begin{figure*}
\centering
\psfrag{4 TX - 3 RX - 1 target}{\tiny{4 TX - 3 RX - 1 target}}
\psfrag{4 TX - 3 RX - 2 targets}{\tiny{4 TX - 3 RX - 2 targets}}
\psfrag{6 TX - 2 RX - 1 target}{\tiny{6 TX - 2 RX - 1 target}}
\psfrag{6 TX - 2 RX - 2 targets}{\tiny{4 TX - 3 RX - 2 targets}}
\psfrag{Transmitters}{\tiny{Transmitters}}
\psfrag{Receivers}{\tiny{Receivers}}
\psfrag{Normalized beam pattern (dB)}{\scriptsize{Normalized beam pattern (dB)}}
\psfrag{u}{\scriptsize{u}}
\psfrag{t=4r=3-1 target}{\tiny{$K_I$ = 4 \& $K_R$ = 3 - 1 target}}
\psfrag{t=4r=3-2 targets}{\tiny{$K_I$ = 4 \& $K_R$ = 3 - 2 targets}}
\psfrag{t=6r=2-1 target}{\tiny{$K_I$ = 6 \& $K_R$ = 2 - 1 target}}
\psfrag{t=6r=2-2 targets}{\tiny{$K_I$ = 6 \& $K_R$ = 2 - 2 targets}}
\subfigure[] {\includegraphics[width=.3\textwidth]{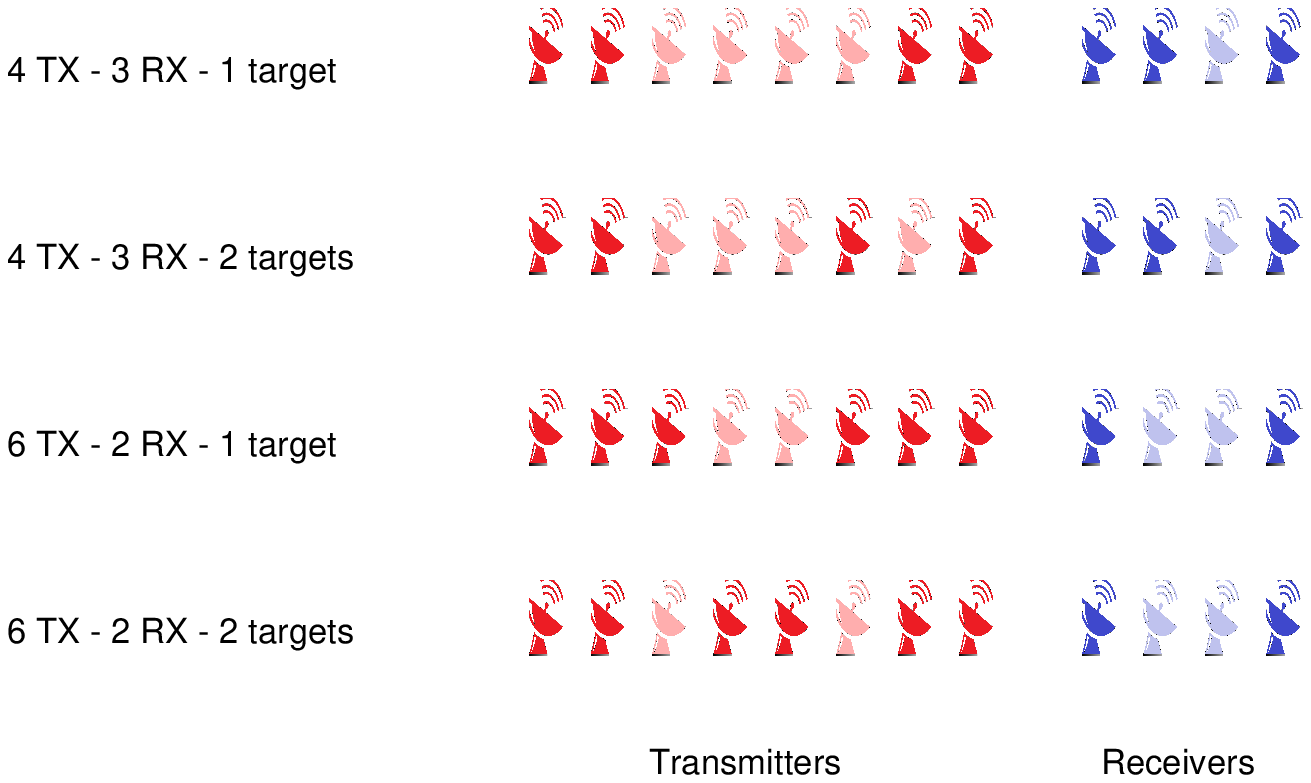}
\label{pulse=1_selection}} \quad
\subfigure[] {\includegraphics[width=.3\textwidth]{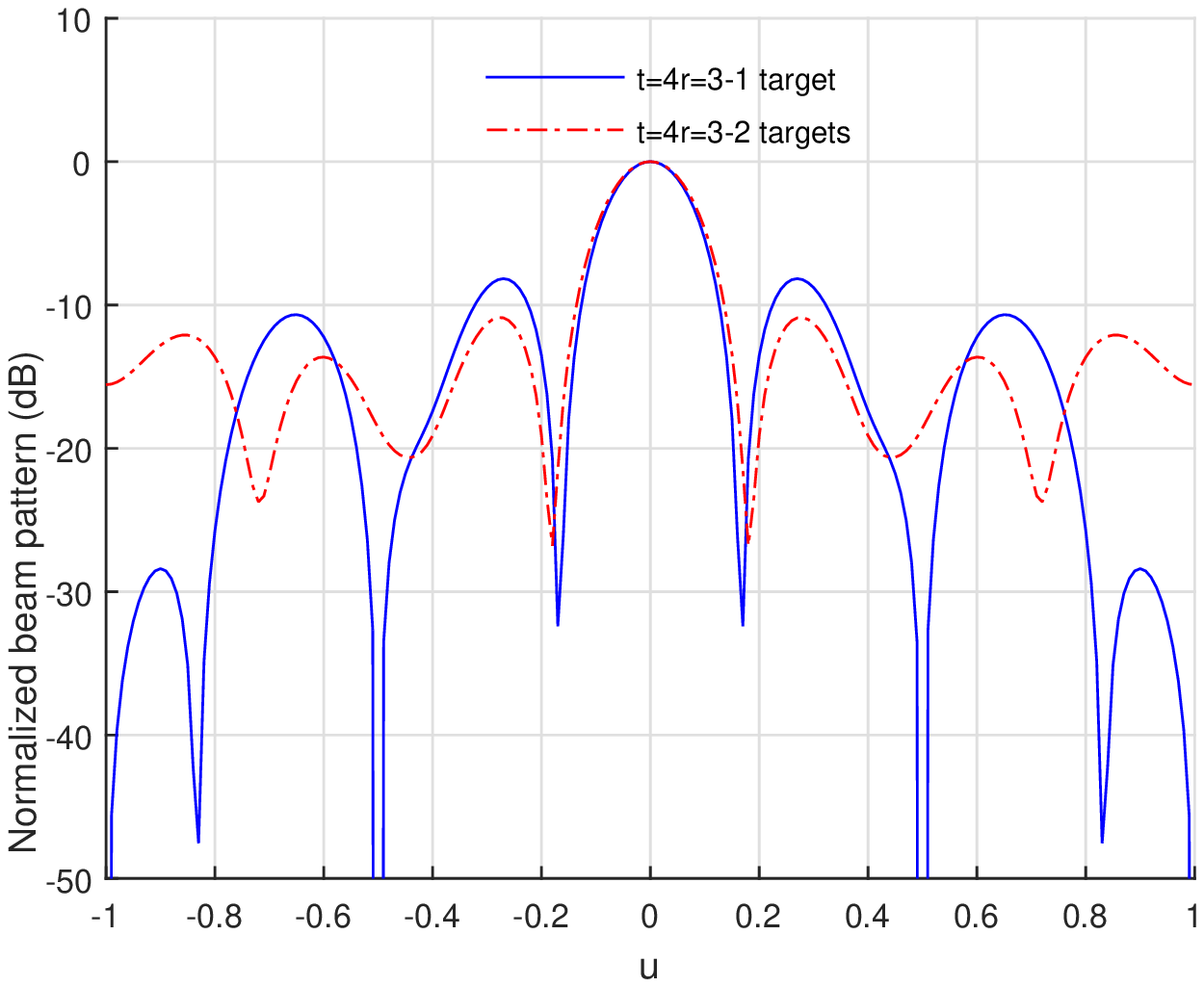}
\label{pulse=1_patt43}} \quad
\subfigure[] {\includegraphics[width=.3\textwidth]{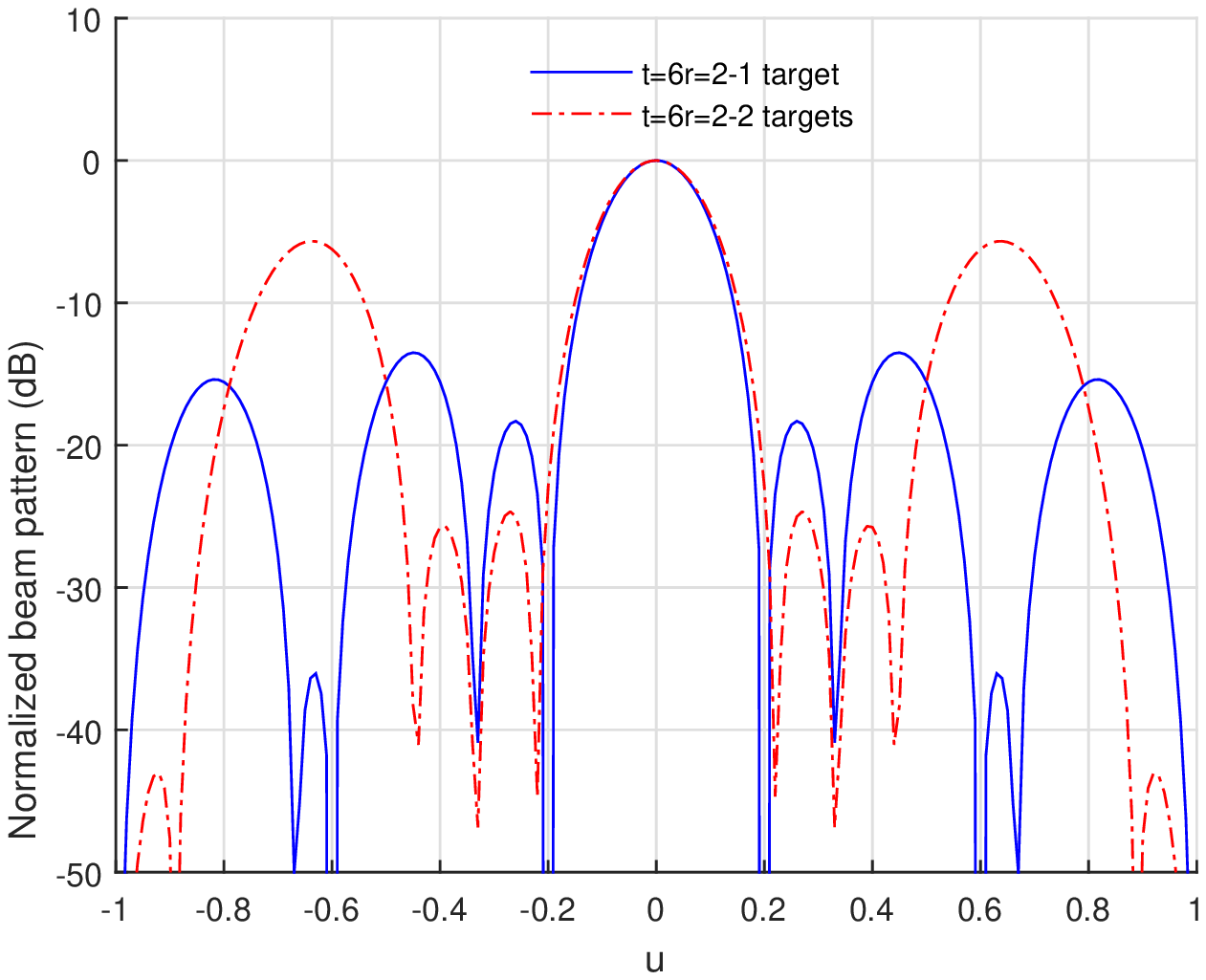}
\label{pulse=1_patt62}} \\
\caption{\footnotesize{MIMO radar antenna selection based on angle estimation error for a total of 8 transmitters and 4 receivers (a) Selected antennas, (b) beampatterns for $4$ transmitters and $3$ receivers, (c) beampatterns for $6$ transmitters and $2$ receivers.}}
\label{pulse=1}
\end{figure*}
\subsection{Discussion}
Based on the above two simple examples, it seems reasonable to seek the optimum sparse sensing scheme (both spatial and temporal) for different numbers of antennas and pulses and compare the estimation accuracy with full sensing. It may be possible to significantly reduce the number of samples at the price of only a small reduction in estimation accuracy. In the following sections, the general problem is stated, algorithms are proposed, and simulation results are presented.

\section{Transmitter-receiver selection}
Let us now study the most general case of transmitter-receiver-pulse selection. In other words, we would try to solve the original problem stated in (\ref{equ:firstmin}). It should be noted that, a transmitter is selected if and only if, it transmits at least one pulse. In the following, we would propose two general approaches to solve the problem: convex and submodular optimization.
\subsection{Convex optimization - E-optimality}
In this subsection, we try to solve the problem employing convex optimization. In principle, all scalar measures could be used since they are all convex, however we only consider E-optimality here because it is the easiest to formulate.
By restricting $g(\cdot)$ to be the maximum value and $f(\cdot)$ to be the maximum eigenvalue of the CRLB, and by relaxing the Boolean constraints in  (\ref{equ:firstmin}), the optimization problem can be written in the epigraph form as

\begin{equation}
\begin{aligned}
\underset{\bA,\bb,\gamma}{\max}
& \; \gamma \\
\text{subject to}
& \; \bF(\bA,\bb,\delta\btheta)  \succeq \gamma \bI_{4\times 4}, \forall \delta\btheta \in \mathcal{D} \\
& \; \sum_{i=1}^I{\sum_{p=1}^P{[\bA]_{i,p}}}\leq K_{P}, \\
& \; \sum_{r=1}^R{[\bb]_{r}}\leq K_{R}, \\
& \; 0\leq [\bA]_{i,p} \leq 1, 1\leq i \leq I, 1\leq p\leq P, \\
& \; 0 \leq [\bb]_{r} \leq 1, 1\leq r \leq R.
\end{aligned}
\label{equ:convex3}
\end{equation}
where $\bF(\bA,\bb,\delta\btheta)$ is the Fisher information matrix, $\bA$ and $\bb$ are the selection matrix and selection vector defined in (\ref{equ:sumfisher}), respectively, and $K_P$ and $K_R$ are the number of selected pulses and receivers, respectively.
Due to the presence of the products of unknowns (i.e., see \eqref{equ:sumfisher}), the optimization problem in (\ref{equ:convex3}) is not convex. Therefore, a convexifying process is introduced in several steps. First we define a pulse selection vector by vectorizing the selection matrix (i.e., $\ba=\vect(\bA)$). Then, we introduce the total selection vector $\bw$ by concatenating both the pulse and receiver selection vectors as
\begin{equation}
\bw = [{\ba}^T,{\bb}^T]^T.
\label{equ:concatenate}
\end{equation}
Finally, we introduce the total selection matrix as $\bW = \bw {\bw}^T$. Employing this new selection vector and matrix, the multiplication of the unknowns can be eliminated, and $\bW = \bw {\bw}^T$ is the only remaining non-convex term. Applying some standard convex relaxations on this term, the relaxed convex optimization problem can be stated as
\begin{equation}
\begin{aligned}
\underset{\bW,\bw,\gamma}{\max}
& \; \gamma \\
\text{subject to}
& \; \bF(\bW,\delta\btheta) \succeq \gamma \bI_{4\times 4}, \forall \delta\btheta \in \mathcal{D}, \\
& \; \begin{bmatrix} \bW  &\bw \\{\bw}^T   &1 \end{bmatrix} \succeq 0, \\
& \; [\bW]_{i,j} = [\bW]_{j,i}, 1\leq i,j \leq IP+R, \\
& \; [\bW]_{i,i} = [\bw]_i, 1\leq i \leq IP+R, \\
& \; \sum_{i=1}^{IP}{[\bw]_{i}}\leq K_{P}, \sum_{r=IP+1}^{IP+R}{[\bw]_{r}}\leq K_{R}, \\
& \; 0\leq [\bw]_{i} \leq 1, 1\leq i \leq I\times P+R,
\end{aligned}
\label{equ:convex4}
\end{equation}
where now the Fisher information matrix, $\bF(\bW,\delta\btheta)$, is reparametrized to be dependent in the introduced total selection matrix $\bW$. The optimization problem in (\ref{equ:convex4}) is a standard semidefinite programming problem in the inequality form which can be efficiently solved in polynomial time using interior-point methods. We can solve (\ref{equ:convex4}) with any of the off-the-shelf solvers. The solution of the relaxed optimization problem is used to compute the suboptimal Boolean solution for the selection problem. A straightforward technique that is often used is based on a simple sorting technique, in which the $K_{P}$ pulses corresponding to the largest values in $\bA$ and the $K_R$ receivers corresponding to the largest values in $\bb$ are selected as the transmitted pulses and receivers, respectively ($\bA$ and $\bb$ are obtained from the selection vector $\bw$ and considering (\ref{equ:concatenate})). However, randomized rounding is employed here which selects the antennas and pulses with a probability equal to the output of the convex problem. Details of randomized rounding are explained in \cite{chepuri2015sparsity}.

\subsection{Submodular optimization - MFP}

Although convex optimization is an efficient method, in this section, greedy submodular optimization is considered as a solution approach. The reason is the computational complexity which is much lower for greedy algorithms in comparison with convex optimization algorithms. This issue is especially important when dealing with large-scale scenarios.

Let us recall $\mathcal{P}$, the set of all transmitters-pulses, and $\mathcal{R}$, the set of all receivers. Furthermore, let us consider $\mathcal{A}\subseteq \mathcal{P}$ and $\mathcal{B} \subseteq \mathcal{R}$ as the set of selected pulses and receivers, respectively, and $\mathcal{S} = \mathcal{A}\cup\mathcal{B}$ as the union set of transmitter-pulses in $\mathcal{A}$ and receivers in $\mathcal{B}$. Finally, we define the ground set $\mathcal{U} = \mathcal{P}\cup\mathcal{R}$ as the union set of all the transmitter-pulses and receivers. 

Now, we introduce a set function $G : 2^{|\mathcal{U}|}\rightarrow\mathbb{R}_+$, defined over the subsets of the ground set $\mathcal{U}$, as the performance measure which is defined based on the modified frame potential as
\begin{equation}\label{eq:Gdef}
G(\mathcal{X}) = \tilde{\rm{FP}}(\mathcal{U}) - \tilde{\rm{FP}}(\mathcal{U}\setminus\mathcal{X}) \;\;\; \text{for } \mathcal{X}\subseteq\mathcal{U},
\end{equation}
where $\tilde{\rm{FP}}(\mathcal{U})$ and $\tilde{\rm{FP}}(\mathcal{X})$ are the MFPs due to the set of measurements $\mathcal{Y}(\mathcal{U})$ and $\mathcal{Y}(\mathcal{X})$, respectively. It is clear from the definition~\eqref{eq:Gdef} that aiming to maximize $G(\mathcal{S}^{\rm c})$, where $\mathcal{S}^{\rm c}$ is the complementary set of $\mathcal{S}$, i.e., $\mathcal{S}^{\rm c} = \mathcal{U}\setminus\mathcal{S}$, is tantamount to minimizing the MFP for the selected set of measurements $\mathcal{S}$. Therefore, it is possible to use~\eqref{eq:Gdef} as a performance metric to select the set of transmitter-pulses $\mathcal{S}$ by first identifying which elements should be \emph{discarded}. In the following, the next theorem guarantees the submodularity of the performance measure and thus gives near-optimal guarantees when the greedy algorithm is employed. 
\begin{theorem}
\label{theorem:tr-MFP}
For transmitter-receiver selection, $G: 2^{|\mathcal{U}|}\rightarrow \mathbb{R}_{+}$ is a normalized, monotone, submodular set function.
\end{theorem}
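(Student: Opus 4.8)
The plan is to reduce the claim to one structural property of the modified frame potential and then invoke two elementary facts about supermodular set functions. Concretely, I would first regard $\tilde{\mathrm{FP}}$ as a set function $\mathcal{S}\mapsto\tilde{\mathrm{FP}}(\mathcal{Y}(\mathcal{S}))$ on the ground set $\mathcal{U}=\mathcal{P}\cup\mathcal{R}$, and show that this map is monotone nondecreasing and \emph{supermodular}. Normalization of $G$ is immediate from the definition, since $G(\emptyset)=\tilde{\mathrm{FP}}(\mathcal{U})-\tilde{\mathrm{FP}}(\mathcal{U}\setminus\emptyset)=0$. The two remaining properties I would then transfer from $\tilde{\mathrm{FP}}$ to $G$ using: (i) complementation preserves supermodularity, so $\mathcal{X}\mapsto\tilde{\mathrm{FP}}(\mathcal{U}\setminus\mathcal{X})$ is supermodular whenever $\tilde{\mathrm{FP}}(\cdot)$ is; and (ii) a constant minus a supermodular function is submodular, whence $G(\mathcal{X})=\tilde{\mathrm{FP}}(\mathcal{U})-\tilde{\mathrm{FP}}(\mathcal{U}\setminus\mathcal{X})$ is submodular. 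Monotonicity of $G$ follows because $\mathcal{U}\setminus\mathcal{X}$ shrinks as $\mathcal{X}$ grows and $\tilde{\mathrm{FP}}$ is monotone, so $-\tilde{\mathrm{FP}}(\mathcal{U}\setminus\mathcal{X})$ and hence $G$ is nondecreasing; nonnegativity ($G:2^{\mathcal{U}}\to\mathbb{R}_+$) is then inherited from $G(\mathcal{X})\geq G(\emptyset)=0$.

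The crux is establishing that $\mathcal{S}\mapsto\tilde{\mathrm{FP}}(\mathcal{Y}(\mathcal{S}))$ is monotone and supermodular, and here I expect the main obstacle to be the \emph{product (batch) structure} of the measurement set: a measurement $y_{r,i,p}[n]$ lies in $\mathcal{Y}(\mathcal{S})$ only if both its receiver $b_r$ and its transmitter-pulse $a_{i,p}$ are selected, so adding a single element of $\mathcal{U}$ activates a whole batch of measurement streams rather than one vector. This is exactly what breaks the usual one-element-one-vector frame-potential argument. To sidestep it I would expand the double sum over the activated set, writing, with nonnegative pairwise weights $w(y,y')\ge 0$ (collecting the inner sum over $n$ and the fixed per-stream normalizations, which depend only on the two streams and not on $\mathcal{S}$),
\begin{equation}
\tilde{\mathrm{FP}}(\mathcal{Y}(\mathcal{S}))=\sum_{y,y'\in\mathcal{Y}(\mathcal{U})}w(y,y')\,\mathbf{1}[\,y\in\mathcal{Y}(\mathcal{S})\,]\,\mathbf{1}[\,y'\in\mathcal{Y}(\mathcal{S})\,].
\end{equation}
Since membership factorizes as $\mathbf{1}[\,y\in\mathcal{Y}(\mathcal{S})\,]=\mathbf{1}[\,b_{r}\in\mathcal{S}\,]\,\mathbf{1}[\,a_{i,p}\in\mathcal{S}\,]$, each summand equals $w(y,y')\,\mathbf{1}[\,T_{y,y'}\subseteq\mathcal{S}\,]$, where $T_{y,y'}\subseteq\mathcal{U}$ is the (at most four-element) set of receivers and transmitter-pulses indexing $y$ and $y'$.

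It then suffices to show that each monomial indicator $\mathcal{S}\mapsto\mathbf{1}[\,T\subseteq\mathcal{S}\,]$ is monotone and supermodular, because a nonnegative combination of such functions inherits both properties and the weights $w(y,y')$ are nonnegative constants. Monotonicity is clear. For supermodularity I would verify the increasing–marginal-returns inequality directly: for $u\notin\mathcal{S}$ the marginal $\mathbf{1}[\,T\subseteq\mathcal{S}\cup\{u\}\,]-\mathbf{1}[\,T\subseteq\mathcal{S}\,]$ equals one exactly when $u\in T$ and $T\setminus\{u\}\subseteq\mathcal{S}$, so for $\mathcal{S}_1\subseteq\mathcal{S}_2$ the condition $T\setminus\{u\}\subseteq\mathcal{S}_1$ forces $T\setminus\{u\}\subseteq\mathcal{S}_2$, and the marginal at $\mathcal{S}_1$ never exceeds that at $\mathcal{S}_2$ — which is precisely supermodularity (an AND of memberships has increasing returns). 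This yields monotonicity and supermodularity of $\tilde{\mathrm{FP}}(\mathcal{Y}(\cdot))$, and combined with the two transfer facts above it completes the proof that $G$ is normalized, monotone, and submodular. As a cross-check one may instead compute the marginal gain of $G$ directly; the same batch bookkeeping reappears, confirming that the product structure — not the normalization, which merely fixes the constants $w(y,y')$ — is the single point requiring care.
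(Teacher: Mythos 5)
Your proof is correct, but it takes a genuinely different route from the paper's. The paper verifies the marginal-gain inequalities for $G$ directly: it adds one element $x$ (a transmit pulse or a receiver) at a time and uses an explicit decomposition identity, $\tilde{\mathrm{FP}}(\hat{\mathcal{S}}\cup\{x\}) = \tilde{\mathrm{FP}}(\hat{\mathcal{S}}) + \tilde{\mathrm{FP}}(\{x\}\cup\mathcal{B}) + \tilde{\mathrm{FP}}(\hat{\mathcal{S}},\{x\}\cup\mathcal{B})$, where the last term is a two-set cross frame potential; monotonicity and submodularity then follow from the nonnegativity and monotonicity of these pieces, with a separate case analysis for mixed element types. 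You instead expand $\tilde{\mathrm{FP}}(\mathcal{Y}(\cdot))$ as a nonnegative combination of monomial indicators $\mathbf{1}[T\subseteq\mathcal{S}]$ with $|T|\leq 4$ (exploiting that measurement membership factorizes as a conjunction of receiver and transmitter-pulse selection), prove each monomial is monotone supermodular, and then transfer to $G$ via the closure of supermodularity under nonnegative combination and complementation plus the fact that a constant minus a supermodular function is submodular. Your observation that the per-pair normalization in the MFP depends only on the two streams, so the weights $w(y,y')$ are constants, is the key point that makes this decomposition legitimate, and it is the same nonnegativity-of-summands fact the paper leans on. What your route buys is uniformity and rigor: it handles the batch-activation structure and all element-type combinations in one stroke, and it makes airtight the step that the paper disposes of somewhat tersely by appealing to ``monotonicity of the MFP'' for the cross terms. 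What the paper's route buys is a self-contained argument that exposes exactly which new inner-product terms appear when an element is added, without invoking the general closure lemmas.
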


\begin{proof}
The proof is derived in Appendix\ref{proof 3}.
\end{proof}

The transmit pulse-receiver selection problem, using the performance metric defined in~\eqref{eq:Gdef} and the union set $\mathcal{S}$, can be now formally introduced as
\begin{equation}\label{eq:probG}
	\begin{array}{ll}
		\underset{\mathcal{S}^{\rm c}\,\subseteq\;\mathcal{U}}{\max} & G(\mathcal{S}^{\rm c})\\
		\text{subject to} & \mathcal{S}^{\rm c} \in \mathcal{I}_p(I P - K_P,R - K_R)
	\end{array},
\end{equation}
where $\mathcal{I}_p(I P - K_P,R - K_R)$ is a \emph{partition matroid}~\cite{schrijver2003combinatorial} whose independent sets are defined as
\begin{equation}\label{eq:defI}
	\begin{split} \mathcal{I}_p(I P - K_P,R - K_R) =  \{ \mathcal{X} : &|\mathcal{X}\cap \mathcal{P}| \leq I P - K_P,\\
			&|\mathcal{X}\cap \mathcal{R}| \leq R - K_R| \},
		\end{split}
\end{equation}   
leveraging the fact that $\{\mathcal{P},\mathcal{R}\}$ is a proper partition of $\mathcal{U}$. Due to the monotonicity of $G$ the maximum is achieved when the inequalities in the definition of the partition matroid are met with equality [cf.~\eqref{eq:defI}]. Therefore, the complementary set of the solution set of~\eqref{eq:probG} will meet the following properties:
\begin{equation}
	|\mathcal{S}\cap\mathcal{P}| = K_P,\;\; 	|\mathcal{S}\cap\mathcal{R}| = K_R,
\end{equation}
which are desired cardinality conditions for the set of selected transmitter-pulses. 
The following greedy algorithm is proposed for transmit pulse-receiver selection. At the starting point, all pulses and receivers are selected, i.e., $\mathcal{S} = \mathcal{P}\cup\mathcal{R}$. That is, we initialize the algorithm with $\mathcal{S}^{\rm c} = \emptyset$. Then, in each step, the greedy algorithm selects the element, either a receiver or transmit pulse, providing the highest cost function value and adds it to the set $\mathcal{S}^{\rm c}$. This procedure continues until the constraints are met. It should be noted that, if one of the constraints of the partition matroid is met with equality while the other is not, the proposed method continues adding elements (receivers or transmit pulses) until the desired cardinality is achieved. Fortunately, due to the structure of the ground set, and its partition, the independence oracle is easily implemented, i.e., routine for checking if a given set $\mathcal{S}$ is contained in a given matroid. Therefore, no overhead is incurred due to this procedure. The pseudocode of the algorithm is presented in {Algorithm 1}. The set returned by {Algorithm 1} achieves $1/2$ near-optimality guarantee~\cite{nemhauser1978analysis}. In the case that the matroid~\eqref{eq:defI} is substituted by a cardinality constraint on the set $\mathcal{S}^{\rm c}$, the greedy heuristic returns a $1 - 1/e$ near-optimal set. This situation can arise in instances when instead of having separated budget for transmit pulses and receivers, a joint budget is considered.

\begin{algorithm}[!b]\caption{Transmitter-pulse-receiver greedy selection based on MFP.}
\begin{algorithmic} 
\STATE \bf{Initialization:}
\STATE $\mathcal{S}^{\rm c} = \emptyset$
\STATE \bf{Greedy algorithm:}
\WHILE {$\mathcal{U}\neq \emptyset$}
\STATE $u^* = \underset{u\;\in\;\mathcal{U}}{\text{argmax}} \; G(\mathcal{S}^{\rm c}\cup\{u\}) $
\IF {$\mathcal{S}^{\rm c}\cup\{u^*\}\in\mathcal{I}_{p}$}
\STATE $\mathcal{S}^{\rm c} \leftarrow \mathcal{S}^{\rm c} \cup \{u^*\}$
\ENDIF
\STATE $\mathcal{U} \leftarrow \mathcal{U}\setminus\{u^*\}$
\ENDWHILE

\STATE $\mathcal{S} \leftarrow \mathcal{U}\setminus\mathcal{S}^{\rm c}$
\STATE $\mathcal{S}\cap\mathcal{P}$ and $\mathcal{S}\cap\mathcal{R}$ are the selected transmitters-pulses and receivers, respectively.
\end{algorithmic}
\end{algorithm}

\subsection{Fixed receivers - selection of pulses and transmitters}
The most general form of the optimization problem was studied in the previous section. In this scenario, we want to introduce the special case of fixed receivers. Proposing this special case is worthwhile for two reasons. First, in some applications we may have the freedom to only select pulses while the transmitters-receivers are fixed. The second reason is that this is a simpler version of the general case that helps to clarify part of the general case, i.e., in the procedure to solve the general problem, if the constraint is met for one of the parameters (i.e., pulses or receivers), the algorithm continues for the other parameter which is similar to this special case. In this case, while the receivers are considered to be fixed, we are trying to optimize the selection of pulses and transmitters to minimize the target's angle-velocity estimation error (the other case of fixed transmitter-pulses and the selection of receivers is similar). Since convex optimization based on E-optimality and submodular optimization based on the MFP for the general case have been already covered, we will not repeat these discussions here for this special case since they are similar and even simpler. However, we show here that in this case the log determinant is also a submodular function and it is possible to employ the greedy heuristic as an alternative optimization algorithm to solve the transmitter-pulse selection problem near optimally. It should be pointed out that the log determinant is not a submodular function for the general optimization problem and thus we only employ it as an objective function for this special case.

\subsubsection*{Submodular optimization - D-optimality}
In this case, we consider the log determinant set function be defined as
\begin{equation}\label{eq:lgFnc}
h(\mathcal{S}) = \begin{cases}
			0 & \Ss = \emptyset\\
			\log\det(\boldsymbol F_{\mathcal{S}}) & \text{otherwise}
	\end{cases},
\end{equation}
where $\boldsymbol F_{\mathcal{S}}$ is the Fisher information matrix [cf.~\eqref{equ:fsum1}] obtained by employing all the pulses in $\mathcal{S}\subseteq\mathcal{P}$. The set function~\eqref{eq:lgFnc} is employed as a performance measure (D-optimality). The greedy algorithm goes as follows. We start with all pulses and all transmitters (i.e., $\mathcal{S}=\mathcal{P}$). In each step, we remove the pulse that reduces the goal function the least. This procedure is repeated until we achieve the required number of pulses. The pseudocode of the greedy algorithm is presented in {Algorithm 2}. Submodularity of this cost function is proven in the following theorem, which ensures the $1-1/e$ performance bound of the greedy algorithm. 

\begin{theorem}
\label{theorem:fixed-logdet}
For pulse and transmitter selection, the set function $h : 2^{|\mathcal{P}|}\rightarrow\mathbb{R}_+$ [cf.~\eqref{eq:lgFnc}] is a normalized, monotone, submodular function.
\end{theorem}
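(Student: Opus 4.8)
The plan is to establish the three properties—normalization, monotonicity, and submodularity—separately for the log-determinant set function $h(\mathcal{S}) = \log\det(\boldsymbol F_{\mathcal{S}})$ defined in~\eqref{eq:lgFnc}. Normalization is immediate by the case definition $h(\emptyset)=0$. For monotonicity and submodularity, the essential structural fact I would exploit is that the Fisher information matrix is \emph{additive} over the selected pulses [cf.~\eqref{equ:fsum1} and~\eqref{equ:fsum2}]: each added transmitter-pulse $u$ contributes a rank-bounded positive semidefinite increment $\boldsymbol F_u \succeq 0$, so that $\boldsymbol F_{\mathcal{S}\cup\{u\}} = \boldsymbol F_{\mathcal{S}} + \boldsymbol F_u$. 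This additive PSD structure is exactly what makes $\log\det$ well behaved, and it is the standard setting in which $\log\det$ is known to be submodular.

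For monotonicity, I would argue that adding a pulse can only increase the log-determinant: since $\boldsymbol F_{\mathcal{S}\cup\{u\}} = \boldsymbol F_{\mathcal{S}} + \boldsymbol F_u$ with $\boldsymbol F_u \succeq 0$, we have $\boldsymbol F_{\mathcal{S}\cup\{u\}} \succeq \boldsymbol F_{\mathcal{S}}$, and $\log\det$ is monotone with respect to the positive semidefinite (Loewner) order on positive definite matrices. One technical caveat worth handling explicitly is the boundary between the empty set (where $h=0$ by fiat) and small sets where $\boldsymbol F_{\mathcal{S}}$ may be rank-deficient and $\log\det$ diverges to $-\infty$; I would address this by noting that the FIM is assumed full rank once enough pulses are selected, or by restricting attention to the regime where the determinant is positive, so that the ordering argument applies cleanly.

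For submodularity, the key step is the diminishing-returns inequality: for $\mathcal{S}_1\subseteq\mathcal{S}_2$ and $u\notin\mathcal{S}_2$,
\begin{equation} \nonumber
\log\det(\boldsymbol F_{\mathcal{S}_1}+\boldsymbol F_u) - \log\det(\boldsymbol F_{\mathcal{S}_1}) \geq \log\det(\boldsymbol F_{\mathcal{S}_2}+\boldsymbol F_u) - \log\det(\boldsymbol F_{\mathcal{S}_2}).
\end{equation}
The cleanest route is to consider the scalar function $\varphi(t) = \log\det(\boldsymbol M + t\,\boldsymbol F_u)$ for a PSD base matrix $\boldsymbol M$ and use the fact that its derivative $\tr\{(\boldsymbol M + t\,\boldsymbol F_u)^{-1}\boldsymbol F_u\}$ is decreasing in $\boldsymbol M$ with respect to the Loewner order; equivalently, one invokes concavity of $\log\det$ together with the matrix inversion/monotonicity identity $\boldsymbol M_1 \preceq \boldsymbol M_2 \Rightarrow \boldsymbol M_1^{-1} \succeq \boldsymbol M_2^{-1}$. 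Since $\boldsymbol F_{\mathcal{S}_1}\preceq\boldsymbol F_{\mathcal{S}_2}$ by additivity, the marginal gain from appending $\boldsymbol F_u$ is larger at the smaller base matrix, which is precisely the required inequality.

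The main obstacle I anticipate is not the submodularity computation itself—that follows the well-trodden $\log\det$ argument—but rather handling the domain subtleties that arise from the two-case definition in~\eqref{eq:lgFnc} and from potential rank deficiency of $\boldsymbol F_{\mathcal{S}}$ for small $\mathcal{S}$. I would therefore devote care to verifying that the diminishing-returns inequality survives at the boundary (e.g., the step from $\emptyset$ to a singleton, where the convention $h(\emptyset)=0$ is compared against a possibly singular or even $-\infty$ value of $\log\det$), and to confirming that the problem setup guarantees $\boldsymbol F_{\mathcal{S}}\succ 0$ over the relevant range so that all the matrix-inverse manipulations are legitimate. With those boundary cases dispatched, the three claimed properties follow from the PSD additivity of the FIM and the standard concavity/monotonicity properties of $\log\det$.
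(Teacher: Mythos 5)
Your proof is correct and rests on the same two pillars as the paper's own argument: the additivity of the FIM over selected pulses, $\bF_{\mathcal{S}\cup\{u\}} = \bF_{\mathcal{S}} + \bF_{\{u\}}$ with $\bF_{\{u\}}\succeq 0$, and the Loewner-order monotonicity of the matrix inverse, $\bF_{\mathcal{S}_1} \preceq \bF_{\mathcal{S}_2} \Rightarrow \bF_{\mathcal{S}_1}^{-1} \succeq \bF_{\mathcal{S}_2}^{-1}$. The only genuine difference is the packaging of the final step: the paper rewrites the diminishing-returns inequality as a ratio of determinants and applies the matrix determinant lemma with $\bF_{\{p_1\}} = \bU\bV^T$, reducing everything to $\det(\bI + \bV^T\bF_{\mathcal{S}}^{-1}\bU) \geq \det(\bI + \bV^T(\bF_{\mathcal{S}}+\bF_{\{p_2\}})^{-1}\bU)$, whereas you integrate the derivative $\tr\{(\bM+t\,\bF_{\{u\}})^{-1}\bF_{\{u\}}\}$ along the segment (equivalently, invoke concavity of $\log\det$); both reductions terminate at the same inverse-monotonicity fact, so the two computations are interchangeable and neither buys more generality than the other. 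One point in your favor: you explicitly flag the rank-deficiency issue for small $\mathcal{S}$, where $\log\det$ is $-\infty$ and $\bF_{\mathcal{S}}^{-1}$ does not exist, which the paper silently assumes away by manipulating $\bF_{\mathcal{S}}^{-1}$ freely and dismissing monotonicity and normalization as ``clear from the definition''; making the full-rank assumption (or a regularization $\bF_{\mathcal{S}}+\epsilon\bI$) explicit is needed for either version of the proof to be airtight.
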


\begin{proof}
The proof is derived in Appendix\ref{proof 1}.
\end{proof}

\begin{algorithm}[t]\caption{Transmit pulse greedy selection based on log determinant.}
\begin{algorithmic} 
\STATE \bf{Initialization:}
\STATE $\mathcal{S} = \mathcal{P};$
\STATE \bf{Greedy algorithm:}
\WHILE {$|\mathcal{S}|>K_P$}
\STATE $p = \underset{p \in \mathcal{S}}{\text{argmax}} \; h(\mathcal{S}\setminus\{p\})$ \textnormal{[cf.}~\eqref{eq:lgFnc}\textnormal{]}
\STATE $\mathcal{S} = \mathcal{S}\setminus \{p\};$
\ENDWHILE

\STATE $\mathcal{S}$  is the set of selected transmit pulses.
\end{algorithmic}
\end{algorithm}

\section {simulation results}


In this section, we study the performance of the proposed algorithms through numerical simulations. The simulations are performed for a radar using a 77-GHz frequency band with a 100 MHz bandwidth which is typically used for automotive radar systems \cite{wagner2013wide}. For the following simulations, we employed CVX to solve the convex optimization problems.

\subsection{Fixed receivers}
In this part, we test the performance of the proposed algorithms for the fixed receivers case. For the first scenario, three receivers, four transmitters, and four pulses are considered in total. All the receivers are assumed to be fixed. In Figure \ref{fixed_rec_MSE}, the results of the different algorithms are represented. The MSE for both angle and velocity estimation of all these three optimization algorithms in addition to the optimum MSE versus the number of pulses are presented in Figure \ref{fixed_rec_MSE}. This plot shows the performance of each algorithm and that their results are close to the optimum value. 



\begin{figure*}
\centering
\centering
\psfrag{MSE}{\scriptsize{MSE}}
\psfrag{# transmit pulses}{\scriptsize{\# transmit pulses}}
\psfrag{logdet-Greedy}{\tiny{logdet-Greedy}}
\psfrag{MFP-Greedy}{\tiny{MFP-Greedy}}
\psfrag{mineigen-Convex}{\tiny{mineigen-Convex}}
\psfrag{exhaustive search}{\tiny{exhaustive search}}
\subfigure[] {\includegraphics[width=.3\textwidth]{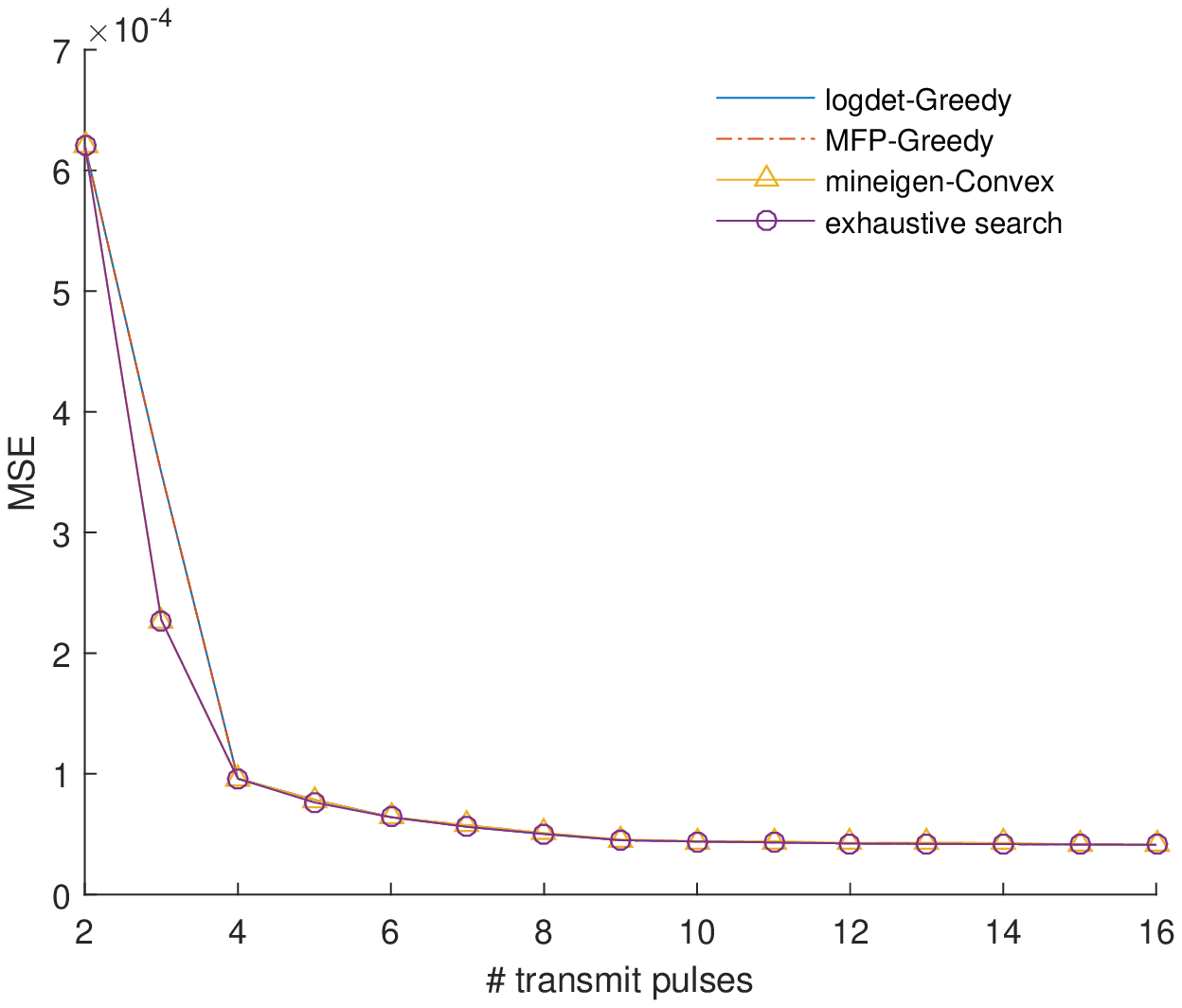}
\label{fixed_rec_MSE}} \quad
\subfigure[] {\includegraphics[width=.3\textwidth]{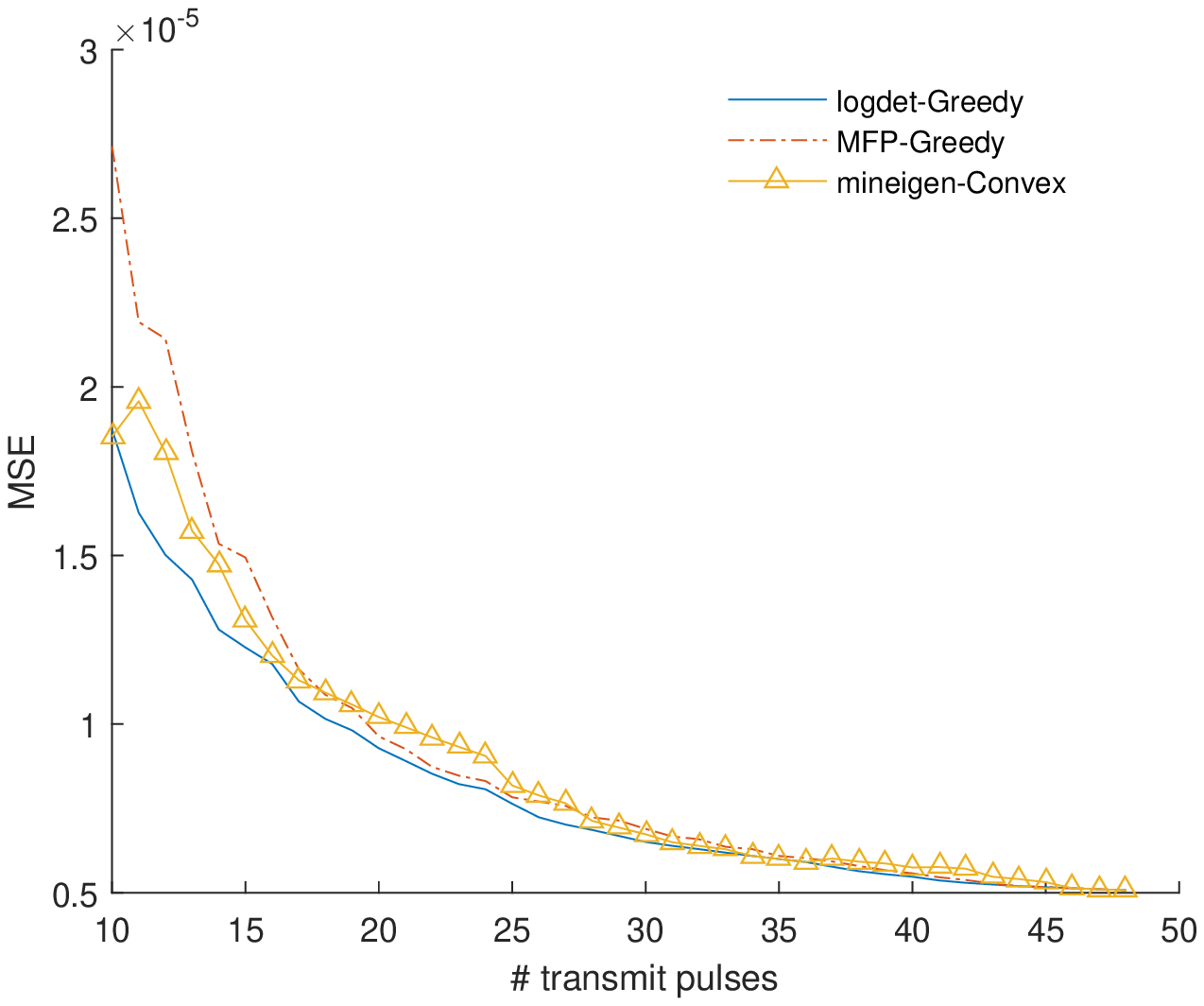}
\label{fixed_rec_48_MSE}} \quad
\psfrag{MSE}{\scriptsize{MSE}}
\psfrag{# transmit pulses}{\scriptsize{\# transmit pulses}}
\psfrag{MFP-Greedy,r=1}{\tiny{MFP-Greedy,\:\:\:\:\:\:\:\:  $K_R$=1}}
\psfrag{mineigen-Convex,r=1}{\tiny{mineigen-Convex, $K_R$=1}}
\psfrag{exhaustive search,r=1}{\tiny{exhaustive search, $K_R$=1}}
\psfrag{MFP-Greedy,r=3}{\tiny{MFP-Greedy,\:\:\:\:\:\:\:\:  $K_R$=3}}
\psfrag{mineigen-Convex,r=3}{\tiny{mineigen-Convex, $K_R$=3}}
\psfrag{exhaustive search,r=3}{\tiny{exhaustive search, $K_R$=3}}
\subfigure[] {\includegraphics[width=.3\textwidth]{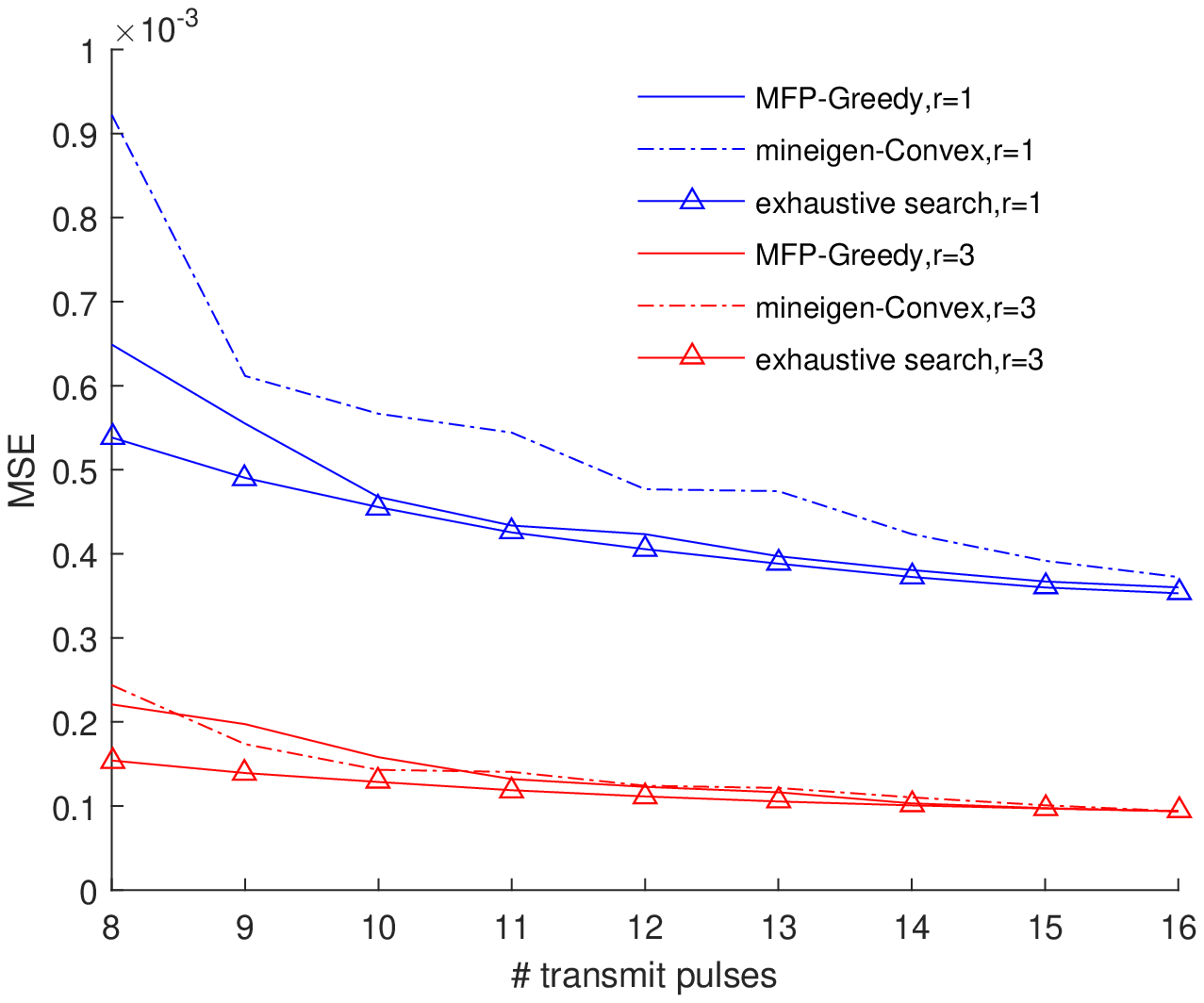}
\label{trp_MSE}} \\
\caption{\footnotesize{MSE versus number of transmitted pulses for (a) $16$ pulses in total, (b) $48$ pulses in total. (c) MSE versus number of transmit pulses for different approaches.}}
\label{mseFigures}
\end{figure*}


In another scenario, we consider two fixed receivers, six transmitters, and eight pulses in total. Figure \ref{fixed_rec_48_MSE} depicts the MSE for both angle and velocity estimation of the three proposed algorithms versus the number of transmitted pulses. All of them have a very close performance in terms of the MSE. In addition, figure \ref{fixed_rec_48_AF_xy} shows the ambiguity function for the result obtained by the submodular algorithm for the MFP when 24 pulses are selected. Here, a low sidelobe level and narrow beamwidth for both the direction cosine and velocity is achieved. The set of selected pulses is presented in Figure \ref{fixed_rec_48_selection}. As it is shown in Figure \ref{fixed_rec_48_selection}, pulses are selected from all the transmitters. Although, there is a tendency of selecting pulses towards the edges, the selected set includes different pulse numbers.




\begin{figure}
\centering
\psfrag{u}{\tiny{u}}
\psfrag{normalized AF}{\tiny{Normalized AF (dB)}}
\psfrag{Velocity (m/s)}{\tiny{Velocity (m/s)}}
\subfigure[] {\includegraphics[width=.25\textwidth]{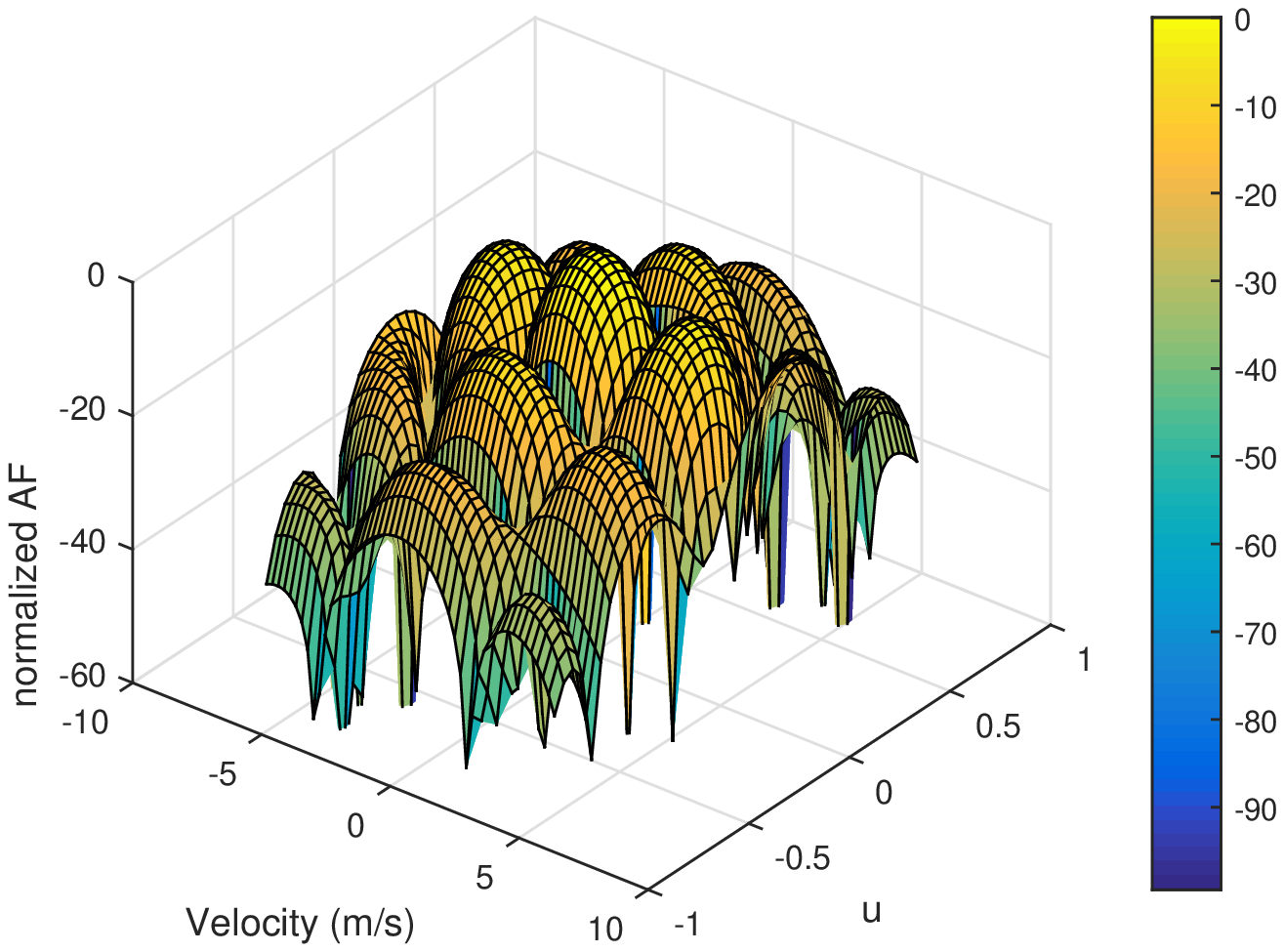}
\label{fixed_rec_48_AF_xy}}%
\subfigure[] {\includegraphics[width=.25\textwidth]{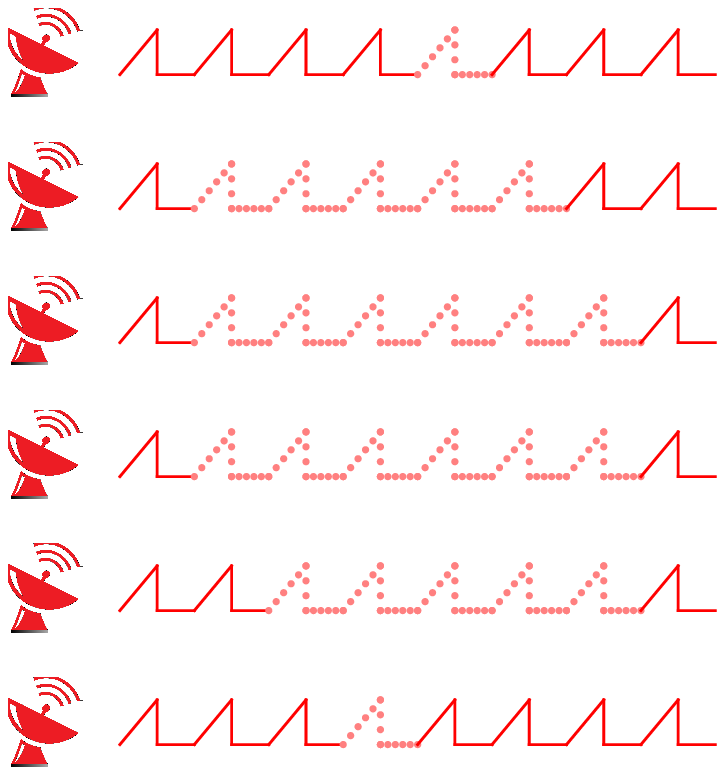}
\label{fixed_rec_48_selection}} \\
\caption{\footnotesize{MFP-submodular optimization result for 24 pulses (a) angle-velocity ambigutiy function, (b) selected transmitters-pulses. }}
\label{fixed_rec_48}
\end{figure}

\subsection{Transmitter-receiver-pulse selection}
Simulation results for the most general case of selecting transmitters-receivers-pulses is studied in this section. In total, we consider four receivers, four transmitters, and four pulses. Figure \ref{trp_MSE} presents the MSE for both angle and velocity estimation of the two optimization algorithms in addition to the optimum MSE. This plot shows again that the results are very close to the optimum value.  Note that the results are plotted for two different cases. In the first case, one out of four receivers is selected and in the second case, three out of four receivers are selected. It is clear that the MSE is lower for the last case. Moreover, Figure \ref{trp_12} presents the result of the submodular algorithm for the MFP when 12 pulses and 3 receivers are selected. The resulting ambiguity function and selected transmitters, receivers, and pulses are depicted in Figures \ref{trp_AF} and \ref{trp_selection}, respectively.

Finally, we consider a large-scale scenario with $20$ receivers, $20$ transmitters, and $10$ pulses in total (i.e., the total number of transmit pulses is $200$). It should be noted that due to the large number of parameters, the greedy algorithm is the only tractable optimization method. This is one of the advantages of submodular optimization over convex optimization. Figure \ref{trp_large_MSE} presents the MSE of the submodular algorithm for the MFP versus the numbers of selected transmit pulses for different number of selected receivers. As expected, the MSE decreases by increasing the number of transmit-pulses and receivers. However, it is shown that this improvement is saturated after a certain point. We could find some operating points in this figure such that by decreasing the performance slightly, a huge reduction in the number of transmit-pulses and receivers is achieved. For instance, the MSE for $80$ transmit-pulses and $8$ receivers is less than twice that of the full case, but with a much lower cost.


\begin{figure*}
\centering
\psfrag{normalized AF}{\scriptsize{Normalized AF (dB)}}
\psfrag{Velocity (m/s)}{\scriptsize{Velocity (m/s)}}
\psfrag{u}{\scriptsize{u}}
\subfigure[] {\includegraphics[width=.3\textwidth]{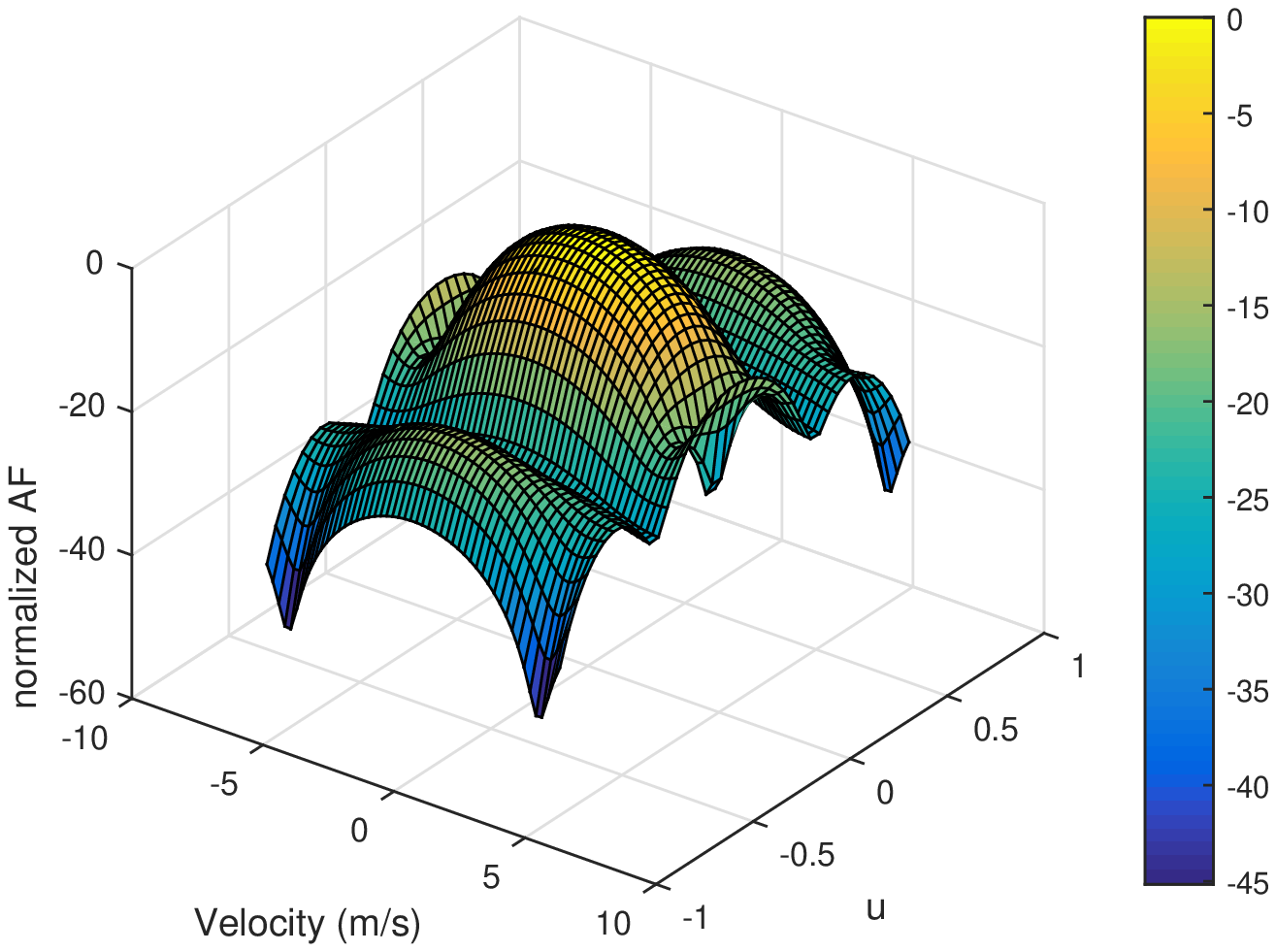}
\label{trp_AF}} \quad
\subfigure[] {\includegraphics[width=.3\textwidth]{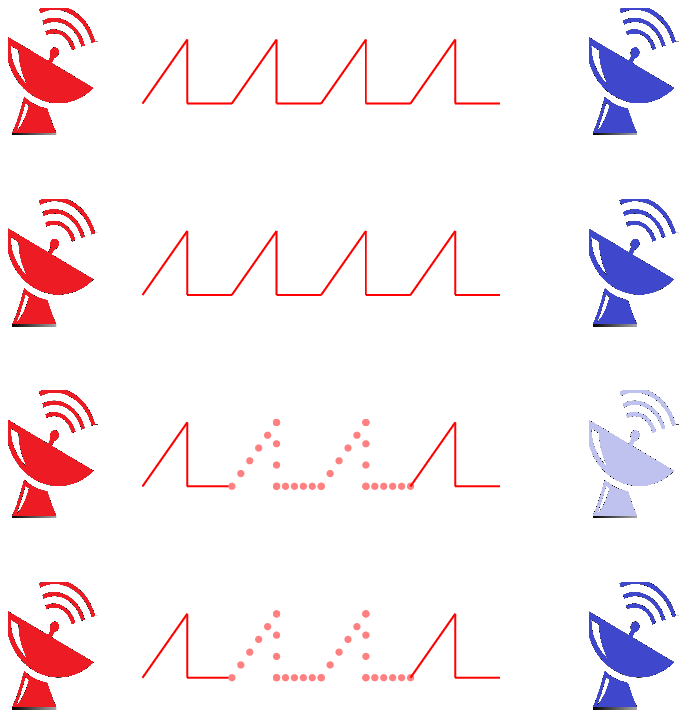}
\label{trp_selection}} \quad
\psfrag{MSE}{\scriptsize{MSE}}
\psfrag{# transmit pulses}{\scriptsize{\# transmit pulses}}
\psfrag{r=4}{\tiny{$K_R$ = 4}}
\psfrag{r=8}{\tiny{$K_R$ = 8}}
\psfrag{r=12}{\tiny{$K_R$ = 12}}
\psfrag{r=20}{\tiny{$K_R$ = 20}}
\subfigure[] {\includegraphics[width=.3\textwidth]{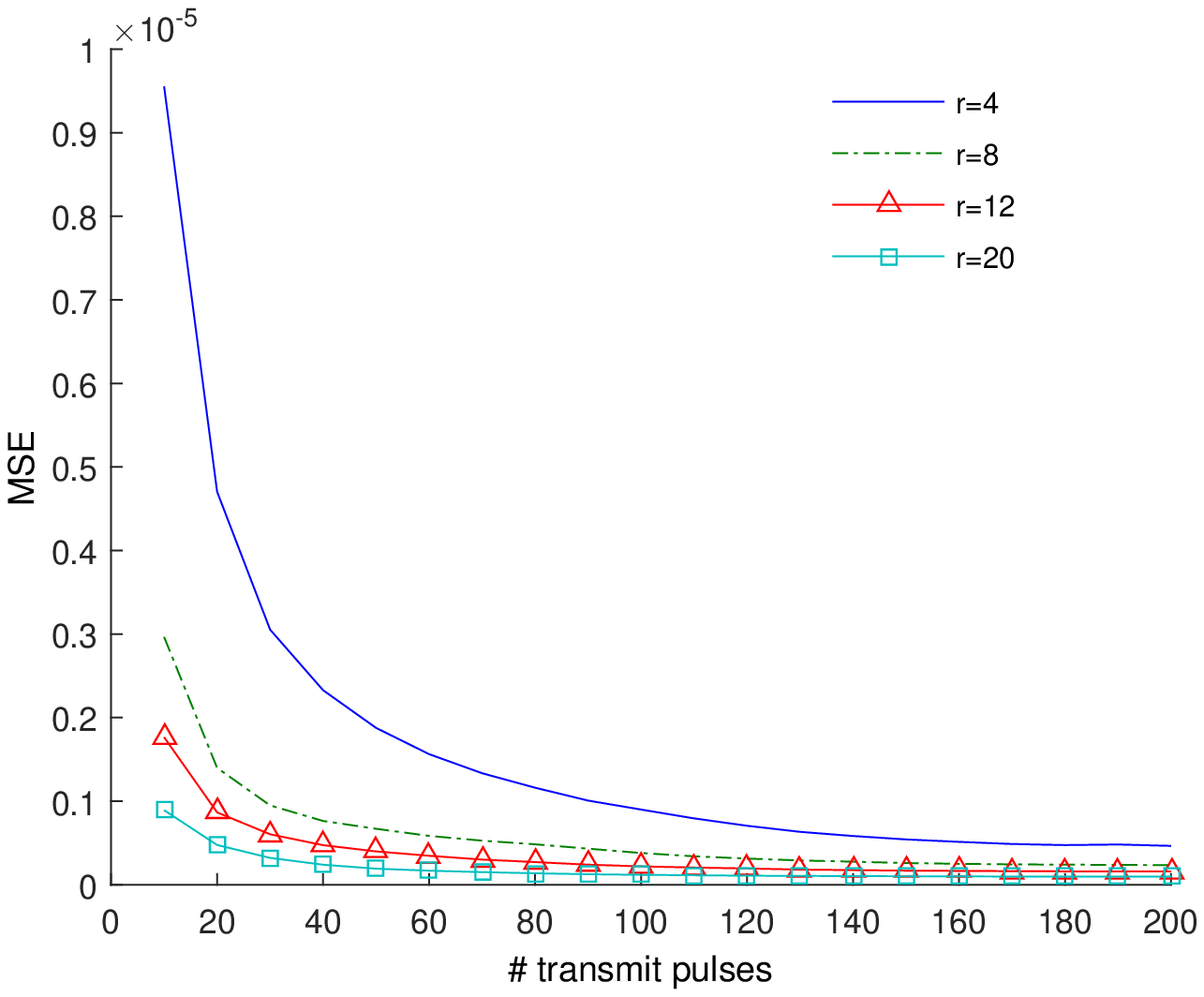}
\label{trp_large_MSE}} \\
\caption{\footnotesize{MFP-submodular optimization for 12 pulses and 3 receivers (a) angle-velocity ambiguity function, (b) selected transmitters-pulses-receivers. (c) MSE versus number of transmit pulses for a large scenario.}}
\label{trp_12}
\end{figure*}

\section{Conclusions}
In this paper, we presented algorithms to find the optimal set of antennas and pulses that achieves the minimum estimation error for different constraints on the number of antennas and pulses. It turned out that a significant reduction in the number of pulses and antennas with a small reduction in estimation accuracy is possible. Beside reducing hardware complexity (the number of antennas) and energy consumption (the number of pulses), the computational complexity is also reduced hugely due to the lower number of total samples. The one- and two-target CRLB for multiple antennas and multiple pulses were derived and it was shown that the two-target CRLB is a better measure for antennas and pulses selection optimization problem. Even though, several performance metrics were proposed, it should be stated that, there is no best solution for all problem instances and the appropriate performance metric should be selected based on the specific application. Convex and submodular optimization as the two different optimization approaches to antenna and pulse selection were introduced. It was shown that convex optimization provides more degrees of freedom in the optimization problem, i.e., it enables min-max optimization. On the other hand, the greedy submodular optimization obtains a near optimal solution with a low computational complexity which is desired especially in large-scale scenarios.

\appendices
\section*{Appendix}

\subsection{Proof of theorem \ref{theorem:tr-MFP}}
\label{proof 3}
\begin{proof}
First, we show that the function is \emph{normalized}. That is, $G(\emptyset) = 0$. This can be proved by noting
\begin{equation}
	G(\emptyset) = \tilde{\FP}(\Us) - \tilde{\FP}(\Us\setminus\emptyset) = \tilde{\FP}(\Us) - \tilde{\FP}(\Us) = 0.
\end{equation}
Now, we show the \emph{monotonicity} of $G$. Without loss of generality, we focus on the case that a new transmit pulse, $x\in\mathcal{P}$, is added as the proof for the other case (new receiver) can be constructed in a similar way. To show this, we require to show the following
\begin{equation}
	G(\Sc\cup\xs) - G(\Sc) \geq 0.
\end{equation}
First, we recall the definition~\eqref{eq:Gdef}, and expand the left-hand-side of the above inequality as
\begin{equation}\label{eq:step1}
	G(\Sc\cup\xs) - G(\Sc) = \tilde{\FP}(\Us\setminus\Sc) - \tilde{\FP}(\Us\setminus\{\Sc,x\}).
\end{equation}
Using the fact that $\Us\setminus\Sc = \Ss$ (by definition) we can rewrite~\eqref{eq:step1} as
\begin{equation}\label{eq:mono1}
	G(\Sc\cup\xs) - G(\Sc) = \tilde{\FP}(\hat{\Ss}\cup\xs) - \tilde{\FP}(\hat{\Ss}),
\end{equation}
where $\hat{\Ss} = \mathcal{S}\setminus \{x\}$. Substituting in~\eqref{eq:step1} the identity
\begin{equation}\label{eq:idenP}
	\tilde{\FP}(\hat{\Ss} \cup \xs) = \tilde{\FP}(\hat{\Ss}) + \tilde{\FP}(\xs \cup\Bs) + \tilde{\FP}(\hat{\Ss}, \xs\cup \Bs),
\end{equation}
where $\Bs = \hat{\Ss}\cap\mathcal{R}$, and we have defined the MFP for two sets (last term in the above expression) as
\begin{equation}
\begin{aligned}
&\tilde{\rm{FP}}(\mathcal{S}_1,\mathcal{S}_2) = \\&\sum_{y\in\mathcal{Y}(\mathcal{S}_1), y'\in\mathcal{Y}(\mathcal{S}_2)}\sum_{n=1}^N\left|\frac{\left<\partial y[n],\partial y'[n]\right>}{\left<\partial y[n],\partial y[n]\right>\left<\partial y'[n],\partial y'[n]\right>}\right|^2,
\end{aligned}
\end{equation}
we can show that
\begin{equation}
	\begin{split}
	G(\Sc\cup\xs) - G(\Sc) =  \tilde{\FP}&(\Bs\cup\xs) +\\
	& \tilde{\FP}(\hat{\Ss},\Bs\cup\xs) \geq 0\label{eq:mono2},
	\end{split}
\end{equation}
which proves the monotonicity of $G$.

Finally, we show the \emph{submodularity} of the set function. To do so, we restrict the proof to the general case, i.e., the elements involved in the proof are a transmit pulse, $x\in\mathcal{P}$, and a receiver, $y\in\mathcal{R}$. As for the case of monotonicity, this general proof can be particularized for the case in which both elements are of the same kind, i.e., two receivers, or two transmit pulses. 

To show submodularity we need to prove that
\begin{equation}
	G(\Sc\cup\xs) - G(\Sc) \geq G(\Sc\cup\{x,y\}) - G(\Sc\cup\{y\}).
\end{equation}
Expanding both sides of the inequality, we obtain
\begin{equation}
	\begin{split}
		\tilde{\FP}(\Us\setminus\Sc) - \tilde{\FP}(\Us\setminus\{\Sc,x\}) &\geq \tilde{\FP}(\Us\setminus\{\Sc,y\})\\
		& - \tilde{\FP}(\Us\setminus\{\Sc,x,y\}).
	\end{split}
\end{equation}
Using the identity $\tilde\Ss = \Us\setminus\{\Sc,x,y\}$ we can express the inequality as
\begin{equation}
	\begin{split}
		\tilde{\FP}(\tilde{\Ss}\cup\{x,y\}) - \tilde{\FP}(\tilde\Ss\cup\ys) &\geq \tilde{\FP}(\tilde\Ss\cup\xs)\\ &\;\;\;- \tilde{\FP}(\tilde\Ss).
	\end{split}
\end{equation}
Finally, using the identity~\eqref{eq:idenP} and the monotonicity of the MFP, we can show that
\begin{eqnarray}
	\tilde{\FP}(\tilde\Bs\cup\{x,y\}) + \tilde{\FP}(\tilde\Ss\cup\ys,\tilde\Bs\cup\{x,y\}) &\geq&\\
	\;\;\;\;\;\;\;\;\;\;\;\;\; \tilde{\FP}(\tilde\Bs\cup\xs) + \tilde{\FP}(\tilde\Ss,\tilde\Bs\cup\xs)\label{eq:defBt} &&\nonumber\\
	\begin{split} \tilde{\FP}(\tilde\Bs\cup\{x,y\}) - \tilde{\FP}(\tilde\Bs\cup\xs) \;+ \\ 
		\tilde{\FP}(\tilde\Ss\cup\ys,\tilde\Bs\cup\{x,y\}) - \tilde{\FP}(\tilde\Ss,\tilde\Bs\cup\xs)
	\end{split} &\geq& 0,
\end{eqnarray}
which proves the submodularity of the set function. In~\eqref{eq:defBt} we have defined $\tilde\Bs = \tilde\Ss\cap\mathcal{R}$ for readability. 
\end{proof}
\subsection{Proof of theorem \ref{theorem:fixed-logdet}}
\label{proof 1}
\begin{proof}
First, let us recall $\mathcal{P}$ as the set of pulses of all the transmitters and $\mathcal{S}\subset \mathcal{P}$ and $\bF_{\mathcal{S}}$ as the Fisher information matrix obtained by employing all the pulses in $\mathcal{S}$. Now, let $p_1,p_2\in \mathcal{P}\setminus \mathcal{S}$, then to prove submodularity we need to show

\begin{equation}
\begin{aligned}
&\log\det(\bF_{\mathcal{S}\cup \{p_1\}}) - \log\det(\bF_{\mathcal{S}}) \geq\\ &\log\det(\bF_{\mathcal{S}\cup \{p_1,p_2\}}) - \log\det(\bF_{\mathcal{S}\cup \{p_2\}}).
\end{aligned}
\label{equ:sub-dopt1}
\end{equation}
Noting that
\begin{equation}
\begin{aligned}
&\bF_{\mathcal{S}\cup \{p_1\}} = \bF_{\mathcal{S}} + \bF_{\{p_1\}}\\
&\bF_{\mathcal{S}\cup \{p_2\}} = \bF_{\mathcal{S}} + \bF_{\{p_2\}}\\
&\bF_{\mathcal{S}\cup \{p_1,p_2\}} = \bF_{\mathcal{S}} + \bF_{\{p_1\}} + \bF_{\{p_2\}},
\end{aligned}
\end{equation}
inequality (\ref{equ:sub-dopt1}) can be rewritten as

\begin{equation}
\begin{aligned}
&\log\det(\bF_{\mathcal{S}} + \bF_{\{p_1\}}) - \log\det(\bF_{\mathcal{S}}) \geq \\ &\log\det(\bF_{\mathcal{S}} + \bF_{\{p_1\}} + \bF_{\{p_2\}}) - \log\det(\bF_{\mathcal{S}} + \bF_{\{p_2\}}).
\label{equ:sublog}
\end{aligned}
\end{equation}
Thus, (\ref{equ:sublog}) implies
\begin{equation}
\frac{\det(\bF_{\mathcal{S}} + \bF_{\{p_1\}})\det(\bF_{\mathcal{S}} + \bF_{\{p_2\}})}{\det(\bF_{\mathcal{S}})\det(\bF_{\mathcal{S}} + \bF_{\{p_1\}} + \bF_{\{p_2\}})} \geq 1 .
\end{equation}
Considering $\bF_{\{p_1\}} = \bU\bV^T$ and employing the matrix determinant lemma in
\begin{equation}
\frac{\det(\bF_{\mathcal{S}})\det(\bI + \bV^T \bF_{\mathcal{S}}^{-1}\bU)\det(\bF_{\mathcal{S}} + \bF_{\{p_2\}})}{\det(\bF_{\mathcal{S}})\det(\bF_{\mathcal{S}} + \bF_{\{p_2\}})\det(\bI + \bV^T (\bF_{\mathcal{S}} + \bF_{\{p_2\}})^{-1}\bU)} \geq 1,
\end{equation}
leads to 
\begin{equation}
\frac{\det(\bI + \bV^T \bF_{\mathcal{S}}^{-1}\bU)}{\det(\bI + \bV^T (\bF_{\mathcal{S}} + \bF_{\{p_2\}})^{-1}\bU)} \geq 1,
\end{equation}
which is clear to be true since $\bF_{\{p_2\}}$ is a positive semi-definite matrix, i.e., $\bF_{\mathcal{S}}^{-1} \succeq (\bF_{\mathcal{S}} + \bF_{\{p_2\}})^{-1}$ as $\bF_{\mathcal{S}} \preceq \bF_{\mathcal{S}} + \bF_{\{p_2\}}$. Therefore, this function is submodular. In addition, monotonicity and normalization are clear from the definition.
\end{proof}


\ifCLASSOPTIONcaptionsoff
\newpage
\fi
\bibliographystyle{ieeetr}
\bibliography{ref}

\end{document}